\documentclass[sigconf,authorversion,nonacm]{acmart}
\usepackage{fancyhdr}

\usepackage[T1]{fontenc}
\usepackage{comment}

\setcopyright{acmlicensed} %
\copyrightyear{2018} %
\acmYear{2018} %
\acmDOI{XXXXXXX.XXXXXXX} %
\acmConference[Conference acronym 'XX]{Make sure to enter the correct
  conference title from your rights confirmation email}{June 03--05,
  2018}{Woodstock, NY}  %
\acmISBN{978-1-4503-XXXX-X/2018/06}  %

\usepackage[createShortEnv]{proof-at-the-end}
\usepackage{amsmath,amsfonts}

\usepackage{amssymb}
\usepackage{amsthm}

\usepackage{tabularx}

\usepackage{textcomp}
\usepackage{xcolor}

\def\BibTeX{{\rm B\kern-.05em{\sc i\kern-.025em b}\kern-.08em
    T\kern-.1667em\lower.7ex\hbox{E}\kern-.125emX}}
\usepackage{nicefrac}
\usepackage{siunitx}
\usepackage{array,framed}
\usepackage{hyperref}
\usepackage{booktabs}
\usepackage{appendix}

\usepackage{cleveref}

\usepackage{
  color,
  float,
  epsfig,
  wrapfig,
  graphics,
  graphicx,
  subcaption,
  verbatim
}
\usepackage{setspace}
\usepackage{latexsym,url}
\usepackage{enumerate}
\usepackage{algorithm}
\usepackage{algpseudocode}
\usepackage{graphics}
\usepackage{xparse} %
\usepackage{xspace}
\usepackage{multirow}
\usepackage{csvsimple}
\usepackage{balance}

\usepackage{
  tikz,
  pgfplots,
  pgfplotstable
}
\usepackage{hyperref}

\usepackage[inline]{enumitem}

\usetikzlibrary{
  shapes.geometric,
  arrows,
  external,
  pgfplots.groupplots,
  matrix,
  patterns,
  patterns.meta
}
\usepgfplotslibrary{fillbetween}

\newif\ifhaloonlyoutline
\haloonlyoutlinetrue %
\tikzset{
  haloonly/.cd,
  outline/.is if=haloonlyoutline,
  outline/.default=true,
}
\newif\ifzhalooutline
\zhalooutlinetrue %
\tikzset{
  zhalo/.cd,
  outline/.is if=zhalooutline,
  outline/.default=true,
}

\tikzset{
  pics/zhalo/.style args={w #1 h #2 inset #3}{
    code={
      \coordinate (-L) at (0,0);
      \coordinate (-U) at (#1,#2);

      \fill[yellow!85!white] (-L) rectangle (-U);

      \fill[
        pattern={Lines[angle=-45, distance=9pt, line width=4.5pt]},
        pattern color=orange!50!white
      ] (-L) rectangle (-U);

      \ifzhalooutline
      \draw[draw=black,densely dotted, thick] (-L) rectangle (-U);
      \fi

      \draw[fill=blue!20!white,draw=none] (#3,#3) rectangle ({#1-#3},{#2-#3});
    }
  },
  pics/haloonly/.style args={w #1 h #2}{
    code={
      \coordinate (-L) at (0,0);
      \coordinate (-U) at (#1,#2);

      \fill[yellow!85!white] (-L) rectangle (-U);

      \fill[
        pattern={Lines[angle=-45, distance=9pt, line width=4.5pt]},
        pattern color=orange!50!white
      ] (-L) rectangle (-U);

      \ifhaloonlyoutline
      \draw[draw=black,densely dotted, thick] (-L) rectangle (-U);
      \fi
    }
  }
}

\usepackage{algorithm}
\usepackage[noEnd=true,indLines=true,italicComments=false,spaceRequire=false]{algpseudocodex}

\usepackage{fancyhdr}
\pgfplotsset{compat=1.9}

\usepackage{mathtools}

\DeclareMathAlphabet{\mathcal}{OMS}{cmsy}{m}{n}

\DeclareGraphicsExtensions{%
    .png,.PNG,%
    .pdf,.PDF,%
    .jpg,.mps,.jpeg,.jbig2,.jb2,.JPG,.JPEG,.JBIG2,.JB2}

\usepackage{graphicx}

\theoremstyle{plain}
\newtheorem{theorem}{Theorem}[section]  %
\newtheorem{lemma}[theorem]{Lemma}
\newtheorem{proposition}[theorem]{Proposition}

\theoremstyle{definition}
\newtheorem{definition}[theorem]{Definition}
\newtheorem{example}[theorem]{Example}

\newcommand{\conf}{\ensuremath{\operatorname{conf}}\xspace}

\newcommand{\class}{\ensuremath{\operatorname{class}}\xspace}
\newcommand{\ReLU}{\ensuremath{\operatorname{ReLU}}\xspace}

\newcommand{\zonoLower}[1]{\ensuremath{\underline{#1}}}
\newcommand{\zonoUpper}[1]{\ensuremath{\overline{#1}}}

\newcommand{\affineForm}{\ensuremath{\mathfrak{z}}}%
\newcommand{\zonotope}{\ensuremath{\mathcal{Z}}}

\newcommand{\verydiff}{\textsc{VeryDiff}}
\newcommand{\ourTool}{\textsc{TwoSafe}}

\colorlet{isabelleCol}{cyan!80!black}
\makeatletter
\newtheoremstyle{lncsisacolor}%
  {0.5\baselineskip}{0.5\baselineskip}%
  {\itshape}%
  {}%
  {\color{isabelleCol}\bfseries}%
  {.}%
  { }%
  {}%
\makeatother
\theoremstyle{lncsisacolor}

\newtheorem{isaLemma}[lemma]{Lemma}

\theoremstyle{plain}

\begin{document}
\title{Differential Zonotopes for Verifying Global Robustness of DNNs}

\author{Anagha Athavale}
\orcid{0000-0002-1620-5700}
\affiliation{%
  \institution{TU Wien \\ AIT Austrian Institute of Technology}
  \city{Vienna}
  \country{Austria}
}

\email{anagha.athavale@tuwien.ac.at}

\author{Samuel Teuber}
\orcid{0000-0001-7945-9110}
\affiliation{%
  \institution{Karlsruhe Institute of Technology}
  \city{Karlsruhe}
  \country{Germany}
}
\email{teuber@kit.edu}

\author{Matteo Maffei}
\orcid{0000-0001-8061-1685}
\affiliation{%
  \institution{TU Wien}
  \city{Vienna}
  \country{Austria}
}
\email{matteo.maffei@tuwien.ac.at}

\author{Ezio Bartocci}
\orcid{0000-0002-8004-6601}
\affiliation{%
  \institution{TU Wien}
  \city{Vienna}
  \country{Austria}
}
\email{ezio.bartocci@tuwien.ac.at}

\author{Dejan Nickovic}
\orcid{0000-0001-5468-0396}
\affiliation{%
  \institution{AIT Austrian Institute of Technology}
  \city{Vienna}
  \country{Austria}
}
\email{dejan.nickovic@ait.ac.at}

\author{Georg Weissenbacher}
\orcid{0000-0002-0143-632X}
\affiliation{%
  \institution{TU Wien}
  \city{Vienna}
  \country{Austria}
}
\email{georg.weissenbacher@tuwien.ac.at}
\begin{abstract}
The robustness of deep neural networks (DNNs) is critical in security-sensitive applications, where small input perturbations should not alter model predictions. This property is commonly formalized as local or global robustness: the former considers perturbations around a single input, while the latter---strictly stronger---quantifies over all input pairs. While local robustness can be expressed as a safety property, global robustness is a 2-safety property, making it substantially more challenging to verify.

We present a novel static analysis technique for verifying the global robustness of DNNs. Our approach is based on differential halo zonotopes, a new abstract domain that extends zonotopes to jointly propagate pairs of perturbed inputs in lock-step while tightly bounding their divergence. In addition, we introduce a symmetric variant of confidence-based global robustness that disregards perturbations leading to differing but low-confidence predictions. This relaxation yields a practically meaningful notion of robustness that applies to a broader class of networks.

We implement our approach in a new tool, called \ourTool{}, and evaluate it on standard DNN verification benchmarks, including widely deployed models. Our results show that \ourTool{} significantly outperforms the state of the art in both precision and scalability, enabling the verification of networks an order of magnitude larger than those handled by prior techniques.
\end{abstract}

\keywords{Neural network verification, Global robustness, Zonotopes }
\maketitle    

\section{Introduction}
\label{intro}

Deep Neural Networks (DNNs)~\cite{goodfellow2016deep} are widely deployed today throughout society, enabling high levels of autonomy and efficiency in many domains, including transportation, healthcare and finance. For many of these applications, the accuracy of DNNs is an essential property, as wrong predictions can lead to significant economic loss or even harm to humans. Therefore, rigorous methods are often required to verify the behavior of DNNs, especially in security-critical settings.

\emph{Robustness}~\cite{seshia2018formal,katz2017reluplex,huang2017safety,gopinath2018deepsafe,athavale2024verifying} has emerged as a key property for safe and correct DNNs, requiring that small input perturbations do not alter model predictions. This is crucial since systems deployed in real-world environments must operate reliably under uncertainty and, possibly adversarial, variability in their inputs and operating conditions.
In the absence of robustness guarantees, even small input deviations (e.g., induced by sensor perturbations) can trigger disproportionate and unexpected changes in the output of a DNN, a consequence of the highly nonlinear structure of the common neural network architectures.
As a result, nominal correctness on typical inputs is insufficient to ensure correct behavior in practice.

In recent years, substantial effort has been invested in the development of methods and tools to formally verify the robustness of DNNs, with predominant approaches relying on the methods of satisfiability solving~\cite{katz2017reluplex,pulina2012challenging,huang2017safety,li2019analyzing} and static analysis~\cite{pulina2010abstraction, singh2019abstract,gehr2018ai2,10097028}.
Most existing methods focus on \emph{local} robustness verification, where robustness is checked only in the neighborhood of predefined input points. Although such guarantees are valuable, they only provide partial assurance and may fail to detect unsafe behaviors in unexplored regions of the input space.

 A recent line of work put forward the concept of confidence-based global robustness~\cite{athavale2024verifying},  which establishes robustness guarantees for all admissible inputs that lead to outputs with sufficiently high confidence. Global guarantees are a prerequisite for the true certification of DNNs and their reliable integration into security-critical applications. However, existing verification methods~\cite{athavale2024verifying} for confidence-based global robustness, based on specialized satisfiability solvers, still suffer from poor scalability.
To some extent, this does not come as a surprise, since local robustness is a safety property, whereas global robustness is a \emph{two-safety} property~\cite{clarkson2010hyperproperties}
which
is notoriously more expensive to verify since it requires confirming the relation~\cite{banerjee2024input,10.5555/3692070.3692181} between \emph{two} universally quantified execution traces, as opposed to a single one.  While local robustness tools scale to networks with thousands of neurons, 
global robustness verifiers are currently limited to networks with at most tens of neurons~\cite{athavale2024verifying}. 

Hence, there is at present an inherent tension between the strength of the robustness guarantees and the scalability of the associated verification tools. This poses the following research question: \emph{How can we improve the scalability of global robustness verification despite its inherent two-safety complexity?}

\paragraph{Contribution.}
We present a novel verification framework for confidence-based global robustness, which we show to outperform the state-of-the-art. %
In particular, we provide the following main contributions.
\begin{itemize} 
 
\item \emph{A differential zonotope-based static analysis technique.}
We develop an efficient verification technique for  confidence-based global robustness, which is based on the new concept of \emph{differential halo zonotopes}.
Our approach is inspired by recent work on differential zonotopes for DNN equivalence checking~\cite{teuber2025revisiting}, but requires non-trivial extension of these ideas to move from pairwise model comparison (the same inputs applied to two different DNNs) to global robustness guarantees (different inputs applied to the same DNN) and, in this setting,  to find the sweet spot between soundness and precision. 

\item \emph{A proof of soundness for differential zonotope-based  static analysis.} We formally prove the soundness of our verification technique, with the core results formalized in Isabelle (results marked in {\color{isabelleCol} cyan}).

\item \emph{A new definition of global robustness.} Our experiments indicate that confidence-based global robustness, as originally defined~\cite{athavale2024verifying}, can be overly restrictive and may fail to hold even for intuitively secure DNNs. In particular, the original definition—hereafter referred to as the \emph{asymmetric} variant—requires that no adversarial perturbation can transform a high-confidence classification into a different one, regardless of the confidence of the resulting classification.

Hence we introduce a weaker variant, which we call \emph{symmetric confidence-based global robustness}. This notion requires that two classifications coincide whenever both are of high confidence. As such, the symmetric definition is strictly less restrictive than the asymmetric one, while remaining meaningful in practical scenarios—for example, when low-confidence predictions are subject to manual inspection or deferred decision-making. Moreover, this definition admits an efficient verification procedure in our framework, as it can be encoded directly in our static analysis and checked using the same abstract domain. %

\item \emph{A tool accompanied with an extensive evaluation.} We implement our global robustness static analysis technique  on top of the propagation algorithm implemented in the \verydiff{} tool~\cite{teuber2025revisiting} for differential equivalence checking of DNNs.
We evaluate our approach on four case studies from the literature, including the LHC and HAR, which are actively deployed in practice. The LHC benchmark~\cite{duarte2018fast} is used for real-time particle classification on FPGA accelerators at CERN, and the HAR benchmark ~\cite{DBLP:conf/esann/AnguitaGOPR13} is used for human activity recognition on wearable devices, 
a technology increasingly deployed in healthcare and assisted living applications \cite{de2017mobile, serpush2022wearable, liu2021overview, bibbo2023human, sankaran2025future}.
In comparison to the State of the Art in global robustness verification~\cite{athavale2024verifying} scaling to 15 $\ReLU$ neurons, our approach scales to significantly wider (500 $\ReLU$s) and deeper DNNs (6 times 50 $\ReLU$s) while supporting larger input spaces (up to 561 dimensions).
\end{itemize}

\paragraph{Overview.}%
After \Cref{background} introduces the necessary background on zonotope-based static analysis and confidence-based robustness properties, \Cref{sec:property} introduces our new symmetric confidence-based property.
\Cref{sec:algorithm} then introduces our new analysis technique for the verification of confidence-based robustness and proves its soundness.
We evaluate our implementation in \Cref{sec:evaluation}.
\Cref{related} discusses related approaches and their similarities and differences.

\section{Background}
\label{background}

We denote vectors in bold (e.g., $\mathbf{v}$) and matrices using capital letters (e.g., $G$).
We consider the verification of piecewise linear, feed-forward DNNs. 

\subsection{Deep Neural Networks (DNNs)}

A DNN with input dimension $I \in \mathbb{N}$, input space $\mathcal{I} \subseteq \mathbb{R}^I$, output dimension $O \in \mathbb{N}$, 
and $L \in \mathbb{N}$ layers 
is a function
$f : \mathcal{I} \rightarrow \mathbb{R}^O$, mapping an input vector 
$\mathbf{x}^{(0)} \in \mathcal{I}$ to an output vector $\mathbf{x}^{(L)} \in \mathbb{R}^O$.
Each layer of the DNN consists of an affine transformation followed by a non-linear activation. 
More precisely, the $i$-th layer computes
\[
\hat{\mathbf{x}}^{(i)} = W^{(i)} \mathbf{x}^{(i-1)} + \mathbf{b}^{(i)},
\qquad
\mathbf{x}^{(i)} = h\!\left(\hat{\mathbf{x}}^{(i)}\right),
\]
where $W^{(i)}$ is a weight matrix, $\mathbf{b}^{(i)}$ a bias, and $h$ a non-linear activation function.

In this work, we focus on networks with ReLU activations, i.e.,
$h(\hat{\mathbf{x}}^{(i)})$ applies $\mathrm{ReLU}(\hat{\mathbf{x}}^{(i)}) = \max(0, \hat{\mathbf{x}}^{(i)})$ component-wise for all $1 \leq i \leq L$ and
$f(\mathbf{x})$ then denotes the composition of the network’s $L$ layers.

\begin{figure}[t]
\centering
\begin{tikzpicture}[
  x=1cm, y=1cm,
  annot/.style={font=\small},
  neuron/.style={circle, draw, minimum size=0.7cm,
                 inner sep=0pt, font=\small},
  input/.style={neuron, fill=yellow!60!green!60},
  hidden/.style={neuron, fill=purple!30!white},
  output/.style={neuron, fill=cyan!60!blue!50},
  edge/.style={->, thick, draw=cyan!60!blue!70},
  layer/.style={font=\small, align=center},
]

\node[layer] at (0,  1.5) {Input layer};
\node[layer] at (3,  1.5) {Hidden layer};
\node[layer] at (6,  1.5) {Output layer};

\node[input]  (x1)   at (0,  0.7) {$x_1$};
\node[input]  (x2)   at (0, -0.7) {$x_2$};
\node[hidden] (z1)   at (3,  0.7) {$z_1$};
\node[hidden] (z2)   at (3, -0.7) {$z_2$};
\node[output] (zout) at (6,  0.0) {$z_{\mathrm{out}}$};

\draw[edge] (x1) -- node[above, annot] {$1$} (z1);
\draw[edge] (x1) -- node[above right, annot] {$1$} (z2);
\draw[edge] (x2) -- node[below right, annot] {$1$} (z1);
\draw[edge] (x2) -- node[below, annot] {$-1$} (z2);
\draw[edge] (z1) -- node[above, annot] {$1$} (zout);
\draw[edge] (z2) -- node[below, annot] {$1$} (zout);

\end{tikzpicture}
\caption{Running example: a DNN with 2 inputs, 2 hidden $\ReLU$
neurons, and 1 output neuron.}
\label{fig:nn}
\end{figure}

\begin{example}[Forward Pass through a DNN]
\label{ex:dnn_forward}%
\Cref{fig:nn} shows a small DNN that we will use as a running example throughout the paper. It has 2 inputs $x_1, x_2 \in [0,1]$, 2 hidden $\ReLU$ neurons, and 1 output neuron. The weights are represented as 
\[
    W^{(1)} = \begin{pmatrix} 1 & 1 \\ 1 & -1 \end{pmatrix}, 
    \qquad 
    W^{(2)} = \begin{pmatrix} 1 & 1 \end{pmatrix},
\] and are also shown on each edge. The bias vectors are set to 0 and ignored for simplicity. 
\paragraph{Hidden layer.} The pre-activation value of $z_1$ 
is computed via the affine transformation:
\[
\hat{z}_1 = 1.0 \cdot x_1 + 1.0 \cdot x_2
\quad 
\hat{z}_2 = 1.0 \cdot x_1 - 1.0 \cdot x_2
\]
Applying the $\ReLU$ function: 
\[
z_1 = \ReLU(\hat{z}_1) = \max(0, x_1 + x_2)
\quad
z_2 =\ReLU(\hat{z}_2) = \max(0, x_1 - x_2) 
\]

\paragraph{Output layer.}
We compute the DNN's output as
\[
z_{out} = 1.0 \cdot z_1 + 1.0 \cdot z_2
\]
For input $\boldsymbol{x} = (0.5, 0.5)^\top$, this gives 
$z_1 = 1$, $z_2 = 0$, and $z_{out} = 1$.
\end{example}

We will focus on \emph{classification} DNNs which provide their classification as $\mathrm{class}\left(f\left(\boldsymbol{x}\right)\right) = \arg\max_{i\in\left[1,O\right]} f_i\left(\boldsymbol{x}\right)$ where $f_i$ is the $i$-th output component of $f$ and $\arg\max$ returns the component $i\in\left[1,O\right]$ for which $f_i\left(\boldsymbol{x}\right)$ is maximal.
Classifications of DNNs are often also interpreted as probability distributions via the function $\mathrm{softmax}_i\left(f\left(\boldsymbol{x}\right)\right) = \frac{e^{f_i\left(\boldsymbol{x}\right)}}{\sum_{j=1}^O e^{f_j\left(\boldsymbol{x}\right)}}$. 
In this case,
we compute the confidence of a DNN as
$\mathrm{conf}\left(f\left(\boldsymbol{x}\right)\right) = \mathrm{softmax}_{\mathrm{class}\left(f\left(x\right)\right)}\left(f\left(\boldsymbol{x}\right)\right)$.
In this work, we will focus on the case where $\mathrm{conf}\left(f\left(\boldsymbol{x}\right)\right) > 0.5$ prohibiting the case where two classes have the same confidence.

\subsection{Zonotopes}
A well-established abstraction technique for the verification of local properties in DNN  is based on \emph{zonotopes}~\cite{DBLP:conf/cav/GhorbalGP09,gehr2018ai2, Singh18}.
Zonotopes form an abstract domain enabling efficient propagation of input sets through DNNs composed of (piecewise) affine transformations.

\begin{definition}[Zonotope]
\label{def:zonotope}
A \emph{zonotope} with input and output dimension $n$ and $m$, respectively, consists of a \emph{generator matrix} $G \in \mathbb{R}^{m \times n}$ and a \emph{center} $\boldsymbol{c} \in \mathbb{R}^m$, which we denote as $\zonotope{} = \left(G,\boldsymbol{c}\right)$.
A zonotope's concretization encompasses
\[
\langle\left(G,\boldsymbol{c}\right)\rangle
= \left\{~
G\boldsymbol{\epsilon} + \boldsymbol{c} ~\middle|~
\boldsymbol{\epsilon} \in \left[-1,1\right]^n
~\right\} \subseteq \mathbb{R}^m.
\]
\end{definition}

\looseness=-1

Intuitively, a zonotope represents a set of points by starting at the center $\boldsymbol{c}$ and allowing movement along each \emph{generator} direction (column of $G$) by any amount in $[-1,1]$. Each $\epsilon_i \in [-1,1]$ independently controls 
how far we move along the $i$-th generator. The concretization 
$\langle(G,\boldsymbol{c})\rangle$ is then the set of \emph{all} points reachable this way.

Zonotopes are denoted by $\zonotope{}$.
We denote by $G_i$ the $i$-th row of $G$ and by $\zonotope{}_i$ the $i$-th component of $\zonotope{}$ which is determined by $G_i$ and $\boldsymbol{c}_i$. Each row $G_i$ together with $\boldsymbol{c}_i$ defines a scalar-valued function of $\boldsymbol{\epsilon}$ called an affine form $\affineForm{}=\left(\boldsymbol{g},c\right)\coloneqq \left(G_i,\boldsymbol{c}_i\right)$, representing one output dimension of the zonotope. A zonotope can thus be viewed is a sequence of $m$ affine forms with common generators. 
Given a zonotope and a vector $\boldsymbol{v}\in\left[-1,1\right]^d$, with $d \leq n$, we denote by $\zonotope{}\left(\boldsymbol{v}\right)$ the zonotope where we fix the first $d$ generators with the values in $\boldsymbol{v}$, i.e. we compute the constant $G_{(1:m),(1:d)} \boldsymbol{v}$.
By moving the resulting fixed value to to the zonotope's center, we can define the resulting zonotope as follows: $\zonotope\left(\boldsymbol{v}\right) \coloneqq \left( G_{(1:m),(d+1:n)},\enspace \left(G_{(1:m),(1:d)} \boldsymbol{v} + \boldsymbol{c}\right)\right)$.

\begin{example}
For a Zonotope $\zonotope{}=\left(G,\boldsymbol{0}\right)$ with $G = \left(\begin{array}{cc}
    1.0  & 0.0 \\
    2.0 & 1.0
\end{array}\right)$ and $\boldsymbol{v}=\left(-1\right)\in\mathbb{R}^1$, the zonotope $\zonotope\left(\boldsymbol{v}\right)$ is computed by fixing the first generator with $-1$ yielding $\left(\left(0\enspace 1\right)^T,\left(-1\enspace -2\right)^T\right)$.
\end{example} %
\noindent
By abuse of notation, for $d = n$ we also refer to $\zonotope{}\left(\boldsymbol{v}\right) \in \mathbb{R}^m$ as the vector obtained by setting $\boldsymbol{\epsilon}=\boldsymbol{v}$.

Zonotopes provide a good trade-off between efficiency and expressivity for the verification of DNNs.
The result of affine transformations can be computed exactly while preserving dependencies w.r.t. generators of the original zonotope: %
\begin{proposition}[Affine Zonotope Transformation~\cite{DBLP:conf/cav/GhorbalGP09,teuber2025revisiting}]
\label{prop:background:zono_affine}
    Let $\zonotope=\left(G,\mathbf{c}\right)$ be a zonotope and %
    $A\left(\mathbf{x}\right)=W\mathbf{x}+\mathbf{b}$ be an affine transformation.
    The zonotope $\hat{\zonotope}= W \zonotope + b \coloneqq \left(WG,W\mathbf{c}+\mathbf{b}\right)$ exactly describes $A$ applied to the points $x \in \langle \zonotope \rangle$, i.e.
    for all $d \leq n$ and $\mathbf{v} \in \left[-1,1\right]^d$:
    $
    \left\{ W\mathbf{x}+\mathbf{b} ~\middle|~ \mathbf{x} \in \langle \zonotope\left(\mathbf{v}\right) \rangle\right\} =
    \langle \hat{\zonotope}\left(\mathbf{v}\right) \rangle
    $
\end{proposition}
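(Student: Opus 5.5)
The plan is to reduce the statement to a direct computation with the definition of $\zonotope(\mathbf{v})$ and then prove a set equality by double inclusion, using that both sides are images of the same parameter set $[-1,1]^{n-d}$.

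\textbf{Step 1: compute $\hat{\zonotope}(\mathbf{v})$.} Write $G = (G_1 \mid G_2)$ with $G_1 = G_{(1:m),(1:d)}$ and $G_2 = G_{(1:m),(d+1:n)}$. Because matrix multiplication acts column-wise, $WG = (WG_1 \mid WG_2)$. Hence, by the definition of fixing generators,
\[
\hat{\zonotope}(\mathbf{v}) = \left(WG_2,\; WG_1\mathbf{v} + W\mathbf{c} + \mathbf{b}\right).
\]
By the same definition,
\[
\zonotope(\mathbf{v}) = \left(G_2,\; G_1\mathbf{v} + \mathbf{c}\right).
\]
The only point needing care is this bookkeeping: the first $d$ columns of $WG$ are exactly $W$ applied to the first $d$ columns of $G$.

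\textbf{Step 2: prove the set equality.} Fix $\mathbf{v}\in[-1,1]^d$.

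For ``$\subseteq$'', take $\mathbf{x}\in\langle\zonotope(\mathbf{v})\rangle$. Then $\mathbf{x} = G_2\boldsymbol{\epsilon} + G_1\mathbf{v} + \mathbf{c}$ for some $\boldsymbol{\epsilon}\in[-1,1]^{n-d}$. By linearity of $W$,
\[
W\mathbf{x}+\mathbf{b} = WG_2\boldsymbol{\epsilon} + \left(WG_1\mathbf{v} + W\mathbf{c} + \mathbf{b}\right),
\]
which lies in $\langle\hat{\zonotope}(\mathbf{v})\rangle$ as witnessed by the same $\boldsymbol{\epsilon}$.

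For ``$\supseteq$'', take a point $WG_2\boldsymbol{\epsilon} + WG_1\mathbf{v} + W\mathbf{c} + \mathbf{b}$ of $\langle\hat{\zonotope}(\mathbf{v})\rangle$. Set $\mathbf{x} = G_2\boldsymbol{\epsilon} + G_1\mathbf{v} + \mathbf{c}$. Then $\mathbf{x}\in\langle\zonotope(\mathbf{v})\rangle$ and $W\mathbf{x}+\mathbf{b}$ equals the given point.

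\textbf{Edge cases.} The case $d=n$ is covered by the same argument: $G_2$ is empty and both sides are singletons. The case $d=0$ (no generators fixed) is the plain statement $\{W\mathbf{x}+\mathbf{b}\mid \mathbf{x}\in\langle\zonotope\rangle\}=\langle\hat{\zonotope}\rangle$, and the same argument applies.

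There is no real obstacle here. The proof is just distributivity of matrix multiplication over block columns and sums, together with reusing the same $\boldsymbol{\epsilon}$ as the witness on both sides.
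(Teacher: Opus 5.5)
Your proof is correct. The paper gives no proof of its own here and simply cites Ghorbal et al.\ and Teuber et al. Your double-inclusion argument is exactly the standard one behind that citation: it uses the block identity $W(G_1 \mid G_2) = (WG_1 \mid WG_2)$, linearity of $W$, and reuse of the same $\boldsymbol{\epsilon}$ as witness in both directions.
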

\noindent
Although expressive, zonotopes allow the efficient computation of bounds:
\begin{proposition}[Zonotope Output Bounds~\cite{DBLP:conf/cav/GhorbalGP09}]
\label{prop:background:zono_bounds}
    Consider some affine form $\affineForm{}=\left(\mathbf{g},c\right)$
    it holds for all $\mathbf{x} \in \left[-1,1\right]^n$ that:\\
    $
    \mathbf{g}\mathbf{x}+c \in
    \left[\zonoLower{\affineForm{}},\zonoUpper{\affineForm{}}\right] \coloneqq
    \left[ c-\sum_{i=0}^n \left|\mathbf{g}_{i}\right|,c+\sum_{i=0}^n \left|\mathbf{g}_{i}\right|  \right]
    $
\end{proposition}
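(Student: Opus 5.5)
We bound the sum $\mathbf{g}\mathbf{x}=\sum_i \mathbf{g}_i\mathbf{x}_i$ term by term, then add the constant $c$. The proof is elementary; the plan is as follows.

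Fix an arbitrary $\mathbf{x}\in[-1,1]^n$. For each index $i$, since $|\mathbf{x}_i|\le 1$, we have
\[
|\mathbf{g}_i\mathbf{x}_i| = |\mathbf{g}_i|\,|\mathbf{x}_i| \le |\mathbf{g}_i|,
\]
and hence $-|\mathbf{g}_i| \le \mathbf{g}_i\mathbf{x}_i \le |\mathbf{g}_i|$. Summing these $n$ two-sided inequalities gives
\[
-\sum_i |\mathbf{g}_i| \le \mathbf{g}\mathbf{x} \le \sum_i |\mathbf{g}_i|.
\]
Adding $c$ throughout yields $\mathbf{g}\mathbf{x}+c\in[\zonoLower{\affineForm{}},\zonoUpper{\affineForm{}}]$, which is the claim.

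Two remarks complete the argument. First, the sum in the statement is written from $i=0$, but generators are indexed $1,\dots,n$. We read the sum as ranging over the $n$ generator coefficients; equivalently, we treat any out-of-range term as $0$, which does not affect the bound. Second, although not required for soundness, the bound is tight: choosing $\mathbf{x}_i=\operatorname{sign}(\mathbf{g}_i)$, respectively $-\operatorname{sign}(\mathbf{g}_i)$, attains the upper, respectively lower, endpoint.

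There is no real obstacle here. The only care needed is this index-range bookkeeping, and the observation that the bounds depend only on the row $\mathbf{g}$ and the center $c$, so they apply unchanged to each component $\zonotope{}_i$. They also apply to a partially instantiated zonotope $\zonotope{}(\mathbf{v})$, since its generator row is again of the form $(\mathbf{g},c)$ by the definition given earlier.
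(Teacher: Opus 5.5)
Your proof is correct: bounding each term by $|\mathbf{g}_i\mathbf{x}_i|\le|\mathbf{g}_i|$, summing over the generators, and shifting by $c$ is the standard argument. You are also right that the lower summation index $i=0$ is a typo for $i=1$. The paper gives no proof of this background fact and only cites Ghorbal et al., so there is no in-paper argument to compare against.
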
%

There exist effective abstract transformers to over-approximate the behavior of $\ReLU$~\cite{Singh18}:
If the output of a $\ReLU$ node is \emph{stable} (i.e. its input is either entirely positive or entirely negative), the $\ReLU$ behaves linearly and we can \emph{exactly} represent its behavior with zonotopes.
On the other hand, if a \ReLU is \emph{unstable} (i.e. its input range crosses includes $0$), zonotopes (as linear transformations of a $[-1,1]$ box) cannot exactly represent its behavior of $\ReLU$ and we must resort to approximation instead.

\begin{wrapfigure}[8]{r}{0.25\textwidth}
\centering
\begin{tikzpicture}
  \begin{axis}[
    axis lines = middle,
    xlabel = {$x$},
    ylabel = {$y=\ReLU\left(x\right)$},
    ylabel style = {anchor=south},
    y=3.75cm,
    x=3.5cm,
    xmin=-0.5, xmax=0.7,
    ymin=-0.15, ymax=0.25,
    xtick distance=5.0,
    ytick distance=5.0]
  \addplot [name path = relu1,
    domain = -0.45:0,
    samples = 20,red] {0};
  \addplot [name path = relu2,
    domain = 0:0.45,
    samples = 20,red] {0.5*x};
  \addplot [name path = zonoUp,
    domain = -0.45:0.45,
    samples = 20, teal] {0.25*x+0.25*0.45};
  \addplot [name path = zonoDown,
    domain = -0.45:0.45,
    samples = 20, teal] {0.25*x)};
  \addplot [teal!30] fill between [of = zonoDown and zonoUp, soft clip={domain=-0.9:0.9}];
  \draw [<->] (axis cs:0.45,0.1125) -- node[anchor=west] {$\boldsymbol{\epsilon}_{\text{new}}$} (axis cs:0.45,0.225);
  \end{axis}
\end{tikzpicture}
\caption{$\ReLU$ approximation}
\label{fig:background:zono_relu_approx}
\end{wrapfigure}%
\noindent
To this end, the zonotope's generator dimension $n$ is increased once per $\ReLU$ neuron and the additional generator $\boldsymbol{\epsilon}_{\text{new}}$ is used to represent the approximation error incurred through representing $\ReLU$ via a linear approximation. %
\Cref{fig:background:zono_relu_approx} illustrates how 
an input zonotope is mapped through $\ReLU$ (red function) to produce 
an over-approximating output zonotope (cyan region). The additional 
generator $\boldsymbol{\epsilon}_{\text{new}}$ captures the 
approximation error.
Overall, we add one additional generator per unstable $\ReLU$ neuron in the NN.

\begin{figure*}[t]
\centering
\resizebox{0.75\linewidth}{!}{
\begin{tikzpicture}[
  x=0.35cm, y=0.35cm,
  annot/.style={font=\small},
  arrow/.style={->, thick},
  netTransform/.style={align=center, draw, rounded corners,
                        dashed, font=\small, inner sep=4pt},
]

  \draw[->, line width=0.6pt] (-0.4,0) -- (4.8,0)
      node[anchor=west, annot] {$x_1$};
  \draw[->, line width=0.6pt] (0,-0.4) -- (0,5.5)
      node[left, annot] {$x_2$};
  \fill[blue!20!white] (0,0) rectangle (4,4);
  \draw[blue!80!white, line width=0.6pt] (0,0) rectangle (4,4);
  \draw[->, blue!80!white,line width=1.6pt] (2,2) -- (4,2);
  \node[annot, below, black] at (3.5,1.8) {$\epsilon_1$};
  \draw[->, blue!80!white,line width=1.6pt] (4,2) -- (4,4);
  \node[annot, left, black] at (4,3.2) {$\epsilon_2$};
  \fill[red!80!white] (2,2) circle (0.15);

\node[netTransform] (aff) at (9,2)
     {Affine Layer};
\draw[arrow] (5.5,2) -- (aff.west);
\draw[arrow] (aff.east) -- (12.5,2);
\node[annot, below right, black] at (2,2) {$\mathbf{c}$};

  \fill[blue!20!white] (12.6,2) -- (16.6,6) -- (20.6,2) -- (16.6,-2) -- cycle;
  \draw[->, line width=0.6pt] (12.6,2) -- (20.6,2)
      node[anchor=west, annot,xshift=0.5] {$z_1$};
  \draw[->, line width=0.6pt] (12.6,-2.5) -- (12.6,7.2)
      node[left, annot] {$z_2$};
  \draw[blue!80!white, line width=0.6pt]
       (12.6,2) -- (16.6,6) -- (20.6,2) -- (16.6,-2) -- cycle;
  \draw[->, blue!80!white,line width=1.6pt] (16.6,2) -- (18.6,4);
  \node[annot, below, black,anchor=north west] at (17.6,3) {$\epsilon_1$};
  \draw[->, blue!80!white,line width=1.6pt] (18.6,4) -- (20.6,2);
  \node[annot, left, black, anchor=south west] at (19.6,3) {$\epsilon_2$};
  \fill[red!80!white] (16.6,2) circle (0.15);

\node[netTransform] (relu) at (27,2)
    {$\ReLU$ Layer};
\draw[arrow] (22.5,2) -- (relu.west);
\draw[arrow] (relu.east) -- (30.5,2);
\node[annot, below right, black] at (17,2) {$\mathbf{c}$};

\begin{scope}[shift={(0,-1)}]
  \fill[blue!20!white] (31,4) -- (35,6) -- (39,4) -- (35,2) -- cycle;
  \draw[blue!80!white, line width=0.6pt]
       (31,4) -- (35,6) -- (39,4) -- (35,2) -- cycle;

  \fill[purple!20!white]
    (31,4) -- (31,2) -- (35,0) -- (39,2) -- (39,4) -- (35,2) -- cycle;
  \fill[pattern={Lines[angle=-45, distance=9pt, line width=4.5pt]},
        pattern color=purple!40!white]
    (31,4) -- (31,2) -- (35,0) -- (39,2) -- (39,4) -- (35,2) -- cycle;

  \fill[purple!20!white]
    (31,4) -- (31,6) -- (35,8) -- (39,6) -- (39,4) -- (35,6) -- cycle;
  \fill[pattern={Lines[angle=-45, distance=9pt, line width=4.5pt]},
        pattern color=purple!40!white]
    (31,4) -- (31,6) -- (35,8) -- (39,6) -- (39,4) -- (35,6) -- cycle;

  \draw[black, thick]
  (31,2) -- (35,0) -- (39,2) -- (39,6) -- (35,8) -- (31,6) -- cycle;

  \draw[->, blue!80!white,line width=1.6pt] (35,4) -- (37,5);
  \node[annot, below, black,anchor=north west] at (36,4.5) {$\epsilon_1$};
  \draw[->, blue!80!white,line width=1.6pt] (37,5) -- (39,4);
  \node[annot, left, black, anchor=south west] at (38,4.5) {$\epsilon_2$};
  \draw[->, red!80!white,line width=1.6pt] (39,4) -- (39,2);
  \node[annot, left, red!80!white, anchor=south west] at (39,2.75) {$\epsilon_3$};
  \fill[red!80!white] (35,4) circle (0.15);
  \draw[->, line width=0.6pt] (31,2) -- (40,2)
      node[anchor=west, annot,xshift=0.5] {$z_1'$};
  \draw[->, line width=0.6pt] (31,-0.5) -- (31,8.2)
      node[left, annot] {$z_2$'};
  \end{scope}

  \fill[blue!20!white] (0,-2) rectangle (2,-1);
  \draw[blue!80!white, line width=0.6pt] (0,-2) rectangle (2,-1);
  \node[annot, anchor=west] at (2.3,-1.5) {True reachable set};
  \fill[purple!20!white] (0,-3.5) rectangle (2,-2.5);
  \fill[pattern={Lines[angle=-45, distance=9pt, line width=4.5pt]},
        pattern color=purple!40!white]
      (0,-3.5) rectangle (2,-2.5);
  \draw[black, thick] (0,-3.5) rectangle (2,-2.5);
  \node[annot, anchor=west] at (2.3,-3)
      {Over-approximation ($\epsilon_3$ dimension)};

\end{tikzpicture}
}
\caption{Zonotope propagation through the running example
(\Cref{ex:zonotope_propagation}): the input zonotope is transformed by the affine layer and then over-approximated through the $\ReLU$
layer. The blue face shows the true reachable set; the pink hatched faces show the over-approximation 
in the extra $\epsilon_3$ dimension, introduced solely because neuron $z_2$ is unstable. The stable neuron $z_1$ requires no new generator.
The arrows illustrate how the generators span the (over-approximate) zonotope.
}
\label{fig:zono_propagation}
\end{figure*}

\begin{example}[Zonotope Propagation through a DNN]
\label{ex:zonotope_propagation}
Consider the DNN from \Cref{fig:nn} with weights: 
\[
W^{(1)} = \begin{pmatrix} 1 & 1 \\ 1 & -1 \end{pmatrix}, 
 \quad
W^{(2)} = \begin{pmatrix} 1 & 1 \end{pmatrix}
\]

We verify the network over the input region $x_1, x_2 \in [0,1]$, 
which we represent as a zonotope with center 
$\boldsymbol{c} = (0.5,0.5)^\top$ and generator matrix 
$G = \begin{pmatrix} 0.5 & 0 \\ 0 & 0.5 \end{pmatrix}$,
so that $x_1 = 0.5\epsilon_1 + 0.5$ and $x_2 = 0.5\epsilon_2 + 0.5$ 
for $\epsilon_1, \epsilon_2 \in [-1,1]$.

\paragraph{Affine layer.}
Applying Proposition~\ref{prop:background:zono_affine}, the 
pre-activation zonotope is:
\[
\hat{G} = W^{(1)} G = 
\begin{pmatrix} 0.5 & 0.5 \\ 0.5 & -0.5 \end{pmatrix}, 
\quad
\hat{\boldsymbol{c}} = W^{(1)}\boldsymbol{c} = 
\begin{pmatrix} 1 \\ 0 \end{pmatrix}.
\]
The two pre-activation affine forms are:
\begin{align*}
z_1 &= 0.5\epsilon_1 + 0.5\epsilon_2 + 1 \in [0, 2]\\
z_2 &= 0.5\epsilon_1 - 0.5\epsilon_2 + 0 \in [-1, 1]
\end{align*}

\paragraph{$\ReLU$ layer.}
\emph{Neuron 1} ($z_1$): since $l_1 = 0 \geq 0$, the neuron is stable
(always active) and $\ReLU(z_1) = z_1$ exactly; no new generator 
is needed. %

\emph{Neuron 2} ($z_2$): since $l_2 = -1 < 0 < 1 = u_2$, the 
neuron is unstable. We introduce a new generator $\epsilon_3$ 
to over-approximate $\ReLU(z_2)$ \cite{Singh18}:
\[
z_2' = 0.25\epsilon_1 - 0.25\epsilon_2 + 0.5\epsilon_3 + 0.5
\]
The generator matrix now has three columns 
$(\epsilon_1, \epsilon_2, \epsilon_3)$, one per input 
dimension and one for the $\ReLU$ approximation error. 

\paragraph{Output layer.}
Applying the output weights $W^{(2)} = (1,1)$:
\[
z_{\text{out}} = z_1 + z_2' = 
0.75\epsilon_1 + 0.25\epsilon_2 + 0.5\epsilon_3 + 1.5
\]
By Proposition~\ref{prop:background:zono_bounds}, the output is then bounded by:
\[
z_{\text{out}} \in \left[1.5 - (0.75 + 0.25 + 0.5),\; 
1.5 + (0.75 + 0.25 + 0.5)\right] = [0, 3]
\]
This is an over-approximation of the true output range due to the 
$\ReLU$ approximation at neuron 2. \Cref{fig:zono_propagation} visualizes this propagation: the input 
zonotope (a square in the $x_1$-$x_2$ plane) 
is transformed by the affine layer into the diamond-shaped zonotope, reflecting the rotation and shearing induced by $W^{(1)}$. After the $\ReLU$ layer, the stable neuron $z_1$ is 
propagated exactly, while the unstable neuron $z_2$ introduces a new generator $\epsilon_3$, represented as the extra dimension of the resulting zonotope. The blue face shows the true reachable set; the pink hatched faces show the over-approximation introduced by the new generator $\epsilon_3$.
\end{example}

\subsection{Zonotopes for Equivalence Verification.}
\looseness=-1
Teuber \emph{et al.}~\cite{teuber2025revisiting} propose a zonotope-based abstract domain to prove equivalence properties across two DNNs using the differential verification paradigm~\cite{paulsen_reludiff_2020,paulsen_neurodiff_2020}.
Given two DNNs of the same structure (modulo weight changes) and an input region of interest, a zonotope-based reachability analysis is performed for both individual DNNs.
In addition, a \emph{differential zonotope} $\zonotope^\Delta$ is propagated in lock-step with the individual analyses.
$\zonotope^\Delta$ then yields a bound on the \emph{difference} between executions which is tighter than the bounds obtained via individual analyses~\cite{teuber2025revisiting}.

We can then reason about differences in DNN behavior by analyzing the zonotopes propagated through the individual DNNs ($\zonotope'$/$\zonotope''$) and the zonotope bounding the difference between executions ($\zonotope^\Delta$) in combination.
Any tuple of points $\left(\boldsymbol{x},\boldsymbol{y}\right)$ is contained in $\langle\left(\zonotope',\zonotope'',\zonotope^\Delta\right)\rangle$
iff they are contained in the individual zonotopes ($\zonotope',\zonotope''$) and their difference $\boldsymbol{x}-\boldsymbol{y}$ is contained in $\zonotope^\Delta$:

\begin{definition}[Concretization of Differential Zonotopes~\cite{teuber2025revisiting}]
\label{def:diff_affine_form_containment}
For two individual zonotopes $\zonotope',\zonotope''$ and a differential zonotope $\zonotope^\Delta$ we define the concretization of $\left(\zonotope',\zonotope'',\zonotope^\Delta\right)$ as follows:
\begin{align*}
\langle\left(\zonotope',\zonotope'',\zonotope^\Delta\right)\rangle = 
\left\{
\left(\boldsymbol{x},\boldsymbol{y}\right) ~\middle|~ 
\begin{aligned}
\exists \boldsymbol{\epsilon}\in\left[-1,1\right]^n \quad
\zonotope'\left(\boldsymbol{\epsilon}\right) &= \boldsymbol{x}\enspace\land\\
\zonotope''\left(\boldsymbol{\epsilon}\right) &= \boldsymbol{y}\enspace\land\\
\zonotope^\Delta\left(\boldsymbol{\epsilon}\right) &= \left(\boldsymbol{x}-\boldsymbol{y}\right)
\end{aligned}
\right\}
\end{align*}
\end{definition}%

A challenge in this approach lies in keeping track of generators added due to necessary approximations of unstable ReLU neurons in $\zonotope'$, $\zonotope''$ or $\zonotope^\Delta$. 
To this end, one can represent zonotopes as consisting of multiple generator blocks (e.g. denoting $\zonotope'=\left(E',A',\boldsymbol{c}\right)$ for $E'\in\mathbb{R}^{m\times n_1}$, $A'\in\mathbb{R}^{m\times n_2}$). %
Generator blocks help to distinguish between generator columns which were part of the initial input encoding (\emph{exact} generators) and generators added later for $\ReLU$ approximations (\emph{approximate} generators). This distinction is important because $\zonotope'$, $\zonotope''$, and $\zonotope^\Delta$ must share the same generator columns to maintain the \emph{correlation} between the two executions.
Without this alignment, the bound on the difference $\zonotope^\Delta$ would be no tighter than simply subtracting the individual zonotopes $\zonotope' - \zonotope''$.
By concatenating the generator blocks and inserting zero-blocks, one obtains standardized zonotopes aligned between $\zonotope',\zonotope'',\zonotope^\Delta$. For simplicity, we omit generator blocks in our presentation. %

Differential zonotopes achieve their improved capabilities in bounding the difference between DNN executions through specialized abstract transformers which update $\zonotope^\Delta$ based on differences between the DNNs as well as differences accumulated in $\zonotope^\Delta$ in previous layers.
For example, for affine transformations, the resulting $\zonotope^\Delta$ can be computed as follows:
\begin{proposition}[Affine Differential Zonotope Transformation~\cite{teuber2025revisiting}]
Given two affine transformations $W_1 \boldsymbol{x} + \boldsymbol{b_1}$ and $W_2 \boldsymbol{y} + \boldsymbol{b_2}$ and
zonotopes $\zonotope'$, $\zonotope''$ and $\zonotope^\Delta$, let $\hat{\zonotope}',\hat{\zonotope}'',\hat{\zonotope}^\Delta$ be as follows:
\begin{align*}
    \hat{\zonotope}' &=\enspace W_1 \zonotope' + \boldsymbol{b_1}, &
    \hat{\zonotope}'' &=\enspace W_2 \zonotope' + \boldsymbol{b_2},\\
    \hat{\zonotope}^\Delta &=\enspace \rlap{$W_1 \zonotope^\Delta  + \left(W_1 - W_2\right) \zonotope'' + \left(\boldsymbol{b_1} - \boldsymbol{b_2}\right)$}.
\end{align*}
Then for $d \leq n$ and $\boldsymbol{v} \in \left[-1,1\right]^d$
with\\
$(\boldsymbol{x},\boldsymbol{y}) \in \langle (\zonotope{}'\left(\boldsymbol{v}\right),\zonotope{}''\left(\boldsymbol{v}\right),\zonotope{}^\Delta\left(\boldsymbol{v}\right)) \rangle$ 
 we get:
\[
\left(\left(W_1 \boldsymbol{x} + \boldsymbol{b_1}\right),\left(W_2 \boldsymbol{y} + \boldsymbol{b_2}\right)\right)
\in 
\langle (\hat{\zonotope}_1\left(\boldsymbol{v}\right),\hat{\zonotope}_2\left(\boldsymbol{v}\right),\hat{\zonotope}^\Delta\left(\boldsymbol{v}\right)) \rangle
\]
\end{proposition}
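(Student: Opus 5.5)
The plan is to prove the statement pointwise. I pick a single witness $\boldsymbol{\epsilon}$ for the premise and show that the \emph{same} witness also works for the three transformed zonotopes. Two readings of notation come first. I read $\hat{\zonotope}''$ as $W_2\zonotope''+\boldsymbol{b_2}$, since the second execution must be propagated through its own zonotope; the $\zonotope'$ in the statement is evidently a typo. I also read $\hat{\zonotope}_1,\hat{\zonotope}_2$ in the conclusion as $\hat{\zonotope}',\hat{\zonotope}''$.

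For zonotopes that share generator columns, I would set $\alpha\zonotope_1+\beta\zonotope_2 \coloneqq (\alpha G_1+\beta G_2,\ \alpha\boldsymbol{c}_1+\beta\boldsymbol{c}_2)$, after padding with zero-blocks as described above. The basic auxiliary fact is that, for every $\boldsymbol{\epsilon}$, evaluation commutes with these operations:
\[
(W\zonotope_1+V\zonotope_2+\boldsymbol{b})(\boldsymbol{\epsilon}) = W\zonotope_1(\boldsymbol{\epsilon})+V\zonotope_2(\boldsymbol{\epsilon})+\boldsymbol{b}.
\]
This holds because $\boldsymbol{\epsilon}\mapsto G\boldsymbol{\epsilon}+\boldsymbol{c}$ is affine. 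Partial fixing also commutes with these operations: $(W\zonotope_1+V\zonotope_2+\boldsymbol{b})(\boldsymbol{v}) = W\zonotope_1(\boldsymbol{v})+V\zonotope_2(\boldsymbol{v})+\boldsymbol{b}$. This follows from the definition of $\zonotope(\boldsymbol{v})$, since both sides give the same trailing generator block $WG_{1,(d+1:n)}+VG_{2,(d+1:n)}$ and the same center. It is the same computation that underlies \Cref{prop:background:zono_affine}.

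Next, fix $\boldsymbol{v}\in[-1,1]^d$ and a pair $(\boldsymbol{x},\boldsymbol{y})$ in the concretization of $(\zonotope'(\boldsymbol{v}),\zonotope''(\boldsymbol{v}),\zonotope^\Delta(\boldsymbol{v}))$. By \Cref{def:diff_affine_form_containment} there is one $\boldsymbol{\epsilon}\in[-1,1]^{n-d}$ with
\[
\zonotope'(\boldsymbol{v})(\boldsymbol{\epsilon})=\boldsymbol{x},\qquad \zonotope''(\boldsymbol{v})(\boldsymbol{\epsilon})=\boldsymbol{y},\qquad \zonotope^\Delta(\boldsymbol{v})(\boldsymbol{\epsilon})=\boldsymbol{x}-\boldsymbol{y}.
\]
Using the commutation facts, the two individual components give $\hat{\zonotope}'(\boldsymbol{v})(\boldsymbol{\epsilon})=W_1\boldsymbol{x}+\boldsymbol{b_1}$ and $\hat{\zonotope}''(\boldsymbol{v})(\boldsymbol{\epsilon})=W_2\boldsymbol{y}+\boldsymbol{b_2}$. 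This is exactly \Cref{prop:background:zono_affine}, instantiated at the fixed witness.

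For the differential component I compute
\[
\hat{\zonotope}^\Delta(\boldsymbol{v})(\boldsymbol{\epsilon}) = W_1(\boldsymbol{x}-\boldsymbol{y})+(W_1-W_2)\boldsymbol{y}+\boldsymbol{b_1}-\boldsymbol{b_2} = (W_1\boldsymbol{x}+\boldsymbol{b_1})-(W_2\boldsymbol{y}+\boldsymbol{b_2}).
\]
This is precisely the difference of the two transformed points. Hence the same $\boldsymbol{\epsilon}$ witnesses membership of the transformed pair in the concretization, which proves the claim.

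I do not expect a mathematical obstacle. The main care point is bookkeeping. All three zonotopes must be aligned on a common generator space, with matching column orders and zero-padding, so that the sums are well defined and the single witness $\boldsymbol{\epsilon}$ is meaningful simultaneously. The other point is the typo correction: with the literal $W_2\zonotope'$, the second component would only give $W_2\boldsymbol{x}+\boldsymbol{b_2}$, and the claim fails in general.
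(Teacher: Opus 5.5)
Your proof is correct. It reuses a single witness $\boldsymbol{\epsilon}$ for all three transformed zonotopes together with the identity $W_1(\boldsymbol{x}-\boldsymbol{y})+(W_1-W_2)\boldsymbol{y}+\boldsymbol{b_1}-\boldsymbol{b_2}=(W_1\boldsymbol{x}+\boldsymbol{b_1})-(W_2\boldsymbol{y}+\boldsymbol{b_2})$, which is the natural direct argument; the paper gives no proof to compare against and simply cites the result from Teuber \emph{et al.} Your readings of $\hat{\zonotope}''$ as $W_2\zonotope''+\boldsymbol{b_2}$ and of $\hat{\zonotope}_1,\hat{\zonotope}_2$ as $\hat{\zonotope}',\hat{\zonotope}''$ correctly fix typos in the statement, and you are right that the literal version fails in general.
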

A similar (though in this case approximate) transformation can be constructed for the difference between two $\ReLU$ nodes by rewriting $\ReLU\left(x\right)-\ReLU\left(y\right)$ to $\ReLU\left(x\right) - \ReLU\left(x-\Delta\right)$~\cite{teuber2025revisiting}.

\begin{example}[Differential Zonotope Propagation through a DNN]
Consider again the DNN from \Cref{fig:nn}.
For sake of simplicity we verify the equivalence of two copies of this network over the input region $x_1, x_2 \in [0, 1]$, represented as the zonotope from \Cref{ex:zonotope_propagation}. 
Since we are verifying equivalence, both networks receive the same 
input, so we initialize:
\[
    \zonotope' = \zonotope'' = (G, \boldsymbol{c}), 
    \quad 
    \zonotope^\Delta = (\mathbf{0}, \mathbf{0})
\]
Since both networks share the same weights and biases (e.g.\ $W_1-W_2=\mathbf{0}$), the affine update from \Cref{prop:background:zono_affine} gives $\hat{\zonotope}^\Delta = (\mathbf{0}, \mathbf{0})$, and the individual zonotopes transform as in \Cref{ex:zonotope_propagation}.
At the unstable neuron $z_2$, both $\zonotope'$ and $\zonotope''$ \emph{independently} introduce approximate generators $\epsilon_3'$ and $\epsilon_3''$.
Naively subtracting $\zonotope'$ and $\zonotope''$ then yields the affine form $0.5 \epsilon_3' - 0.5\epsilon_3''$ which leads to a difference bound of $\left[-1,1\right]$.
In contrast, the transformation of the differential zonotope $\zonotope^\Delta$ accounts for the previously accumulated difference $\left[0,0\right]$ yielding the output bound $[0,0]$ which proves equivalence.
This difference in precision underscores the effectiveness of differential verification for hyper-properties where naive zonotope propagation suffers from accumulating over-approximation errors.

The initialization $\zonotope' = \zonotope''$ and $\zonotope^\Delta = \mathbf{0}$ is specific to equivalence verification. As we show in \Cref{sec:algorithm}, naively reusing this initialization for global robustness verification, where the two executions receive 
\emph{different} inputs, leads to unsound or imprecise results --- 
motivating our novel differential halo zonotope encoding.
\end{example}

The differential zonotope abstract domain is implemented in the tool \verydiff{}~\cite{teuber2025revisiting} which, however, only supports equivalence verification, i.e. the equivalence of two DNNs where the second DNN was either created through weight modifications~\cite{teuber2025revisiting} (e.g., via pruning) or through polynomial approximation of activation functions~\cite{kern2025certified} (for fully homomorphic encryption).
Prior work has shown that differential verification, especially in classification settings, is very sensitive to the considered property, i.e., not all relational properties benefit equally from differential verification~\cite[Sec. 6]{teuber2025revisiting}.

\subsection{Asymmetric Confidence-Based Global Robustness}
\label{subsec:background:confidence}
A widely studied property of DNNs is \emph{local robustness}, which requires that small perturbations of a fixed input do not change the network's classification. Formally, a DNN $f$ is locally robust at an input $\boldsymbol{x}_0$ with tolerance $\boldsymbol{\varepsilon}$ iff
\[
    \forall \boldsymbol{x} \in \mathcal{I}.\quad 
    |\boldsymbol{x}_0 - \boldsymbol{x}| \leq \boldsymbol{\varepsilon} 
    \rightarrow \class(f(\boldsymbol{x}_0)) = \class(f(\boldsymbol{x}))
\]

While local robustness provides meaningful guarantees around specific inputs, it leaves the rest of the input space unconstrained. A natural attempt to generalize this to a \emph{global} property is to require classification consistency for \emph{every} pair of nearby inputs:
\[
    \forall \boldsymbol{x}, \boldsymbol{y} \in \mathcal{I}.\quad 
    |\boldsymbol{x} - \boldsymbol{y}| \leq \boldsymbol{\varepsilon} 
    \rightarrow \class(f(\boldsymbol{x})) = \class(f(\boldsymbol{y}))
\]
However, this naive lifting is unsatisfiable for any non-trivial classifier: along any decision boundary, two arbitrarily close inputs are mapped to different classes by definition. 

To overcome this, Athavale et al.~\cite{athavale2024verifying} introduce \emph{confidence-based global robustness}, which restricts the global requirement to inputs classified with high confidence \cite{kull2019beyond, guo2017calibration,ao2023two}. The intuition is that decision boundaries are typically associated with low confidence, so excluding low-confidence inputs makes a global property attainable. Formally, we define confidence-based global robustness as follows:

\begin{definition}[Asymmetric Confidence-Based Global Robustness~\cite{athavale2024verifying}] \label{def:conf_rob_old}
A DNN $f$ satisfies asymmetric global robustness for confidence $\tau > 0$ and tolerance $\boldsymbol{\varepsilon}$ iff
\begin{displaymath}
\begin{aligned}
\forall \boldsymbol{x}, \boldsymbol{y}\in\mathcal{I}\,.\quad
       |\boldsymbol{x} - \boldsymbol{y}| \leq \boldsymbol{\varepsilon} \wedge \conf(f(\boldsymbol{x})) > \tau \enspace \\ \rightarrow \enspace \class(f(\boldsymbol{x})) = \class(f(\boldsymbol{y}))
      \end{aligned}
\end{displaymath}
\end{definition}

Asymmetric confidence-based global robustness~\cite{athavale2024verifying}, intuitively, ensures that for any data instance $\boldsymbol{x}$ classified with high confidence $\tau$, no other data instance, $\boldsymbol{y}$, that is $\varepsilon$ close to $\boldsymbol{x}$ shall be classified to a different class. The asymmetry is reflected in the confidence requirement being placed only on $\boldsymbol{x}$ and not on $\boldsymbol{y}$.

\section{Symmetric Confidence-Based Global Robustness} 
\label{sec:property}

\looseness=-1
\Cref{subsec:background:confidence} presented confidence-based robustness as a property with the potential to provide strong robustness guarantees on the entire input space of a DNN.
We note that this property stipulates a  
strong requirement, as it  precludes the possibility of low-confidence adversarial examples.
We argue that in several situations, it is meaningful to \emph{allow} the network to classify $\boldsymbol{y}$ to a different class, if it does so with a low confidence score (e.g., when a low confidence score triggers a warning for manual intervention). %

To this end, we introduce a more permissive,
 \textit{symmetric} confidence\nobreakdash-based global robustness property (\Cref{def:conf_rob_new}) that only constrains pairs of inputs that are
both classified with high confidence to different classes.
The difference between the asymmetric and the symmetric confidence-based robustness
properties is illustrated in Figure~\ref{fig:symmetric-motivation}.
The pair of pink points are marked safe by both asymmetric and symmetric properties, since they are classified to the same class with high confidence scores.
The orange pair is marked unsafe by both properties since the points are classified to different classes with high confidence scores.
The blue pair, however, highlights the limitation of asymmetric confidence-based global robustness, because one blue point is classified as \class 3 with high confidence, while the other blue point is classified to \class 1 with low confidence.
The asymmetric property flags this as a violation, whereas the symmetric property considers it safe, since the
disagreement involves a low-confidence prediction that would not be trusted in
practice. 

\begin{figure}
\centering
\includegraphics[width=0.25\textwidth]{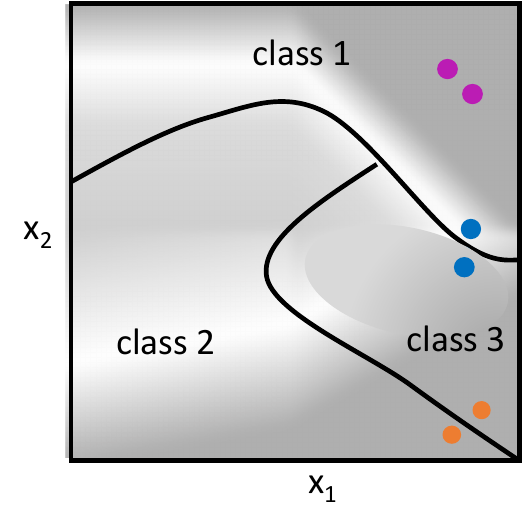}
\caption{Symmetric vs asymmetric confidence-based global robustness.
$x_1$ and $x_2$ denote points in the continuous input space. Grayscale shading indicates confidence, where darker regions correspond to higher confidence. The pink pair satisfies both properties. The blue pair violates the asymmetric property but satisfies the symmetric property due to low confidence at the perturbed point. The orange pair
violates both properties.
 }
\label{fig:symmetric-motivation}
\end{figure}

The figure highlights that while small perturbations may cause a change in the predicted class, such changes are often accompanied by a significant drop in confidence.
Treating all such cases as violations, as required by the asymmetric definition, can be overly restrictive.
Therefore, we propose the following symmetric property, which is more permissive in this regard:

\begin{definition}[Symmetric Confidence-Based Global Robustness]\label{def:conf_rob_new}
A DNN $f$ satisfies symmetric global robustness for confidence $\tau > 0$ and tolerance $\boldsymbol{\varepsilon}$ iff 
\begin{displaymath}
\begin{split}
    \forall \boldsymbol{x}, \boldsymbol{y}\in\mathcal{I}\,.\quad
      | \boldsymbol{x} - \boldsymbol{y}| \leq \boldsymbol{\varepsilon} \wedge \conf(f(\boldsymbol{x})) > \tau \wedge \conf(f(\boldsymbol{y})) > \tau \\
      \rightarrow \class(f(\boldsymbol{x})) = \class(f(\boldsymbol{y}))
\end{split}
\end{displaymath}
\end{definition}

\noindent
\looseness=-1
This property requires that for any two $\varepsilon$-close inputs that are both classified with high confidence, the DNN maps the inputs to the same class.
For instance, consider a DNN classifying images as \emph{cats} and \emph{dogs}.
Assume an image of a dog is classified to class \textit{dog} with 95\% confidence, but tweaking it by a few pixels makes it classify as \textit{cat} with confidence 60\%.
\Cref{def:conf_rob_old} does not allow this and marks the DNN as unsafe.
However, this is a very strong requirement on the network, which is unlikely to be necessary in practice.
If we knew the DNN was robust for inputs with confidence $\geq 80\%$, it would be reasonable to \emph{reject} any classifications with confidence $<80\%$ -- a common risk limitation strategy for DNNs~\cite{DBLP:conf/nips/GeifmanE17}.
Doing so would in particular reject the result for the low-confidence image above, which was wrongly classified as cat.
Hence, we argue adversarial examples may be considered permissible if they only lead to a low confidence classification.
This observation is precisely covered by \Cref{def:conf_rob_new}:
Symmetric confidence-based global robustness does \emph{not} flag this case as a counterexample.
It would, however, be unacceptable if the classification decision changes to cat with high confidence (e.g. with $\geq 80\%$ confidence), as we then would likely accept this classification as correct.
Our new property imposes a weaker requirement on the DNN and
hence can be fulfilled by a broader class of DNNs than the asymmetric version.
\begin{lemma}
\label{cor:asymm-implies-symm}
If a DNN satisfies asymmetric confidence-based global robustness for parameters
$(\epsilon, \tau)$, then it also satisfies symmetric
confidence\nobreakdash-based global robustness for parameters $(\epsilon, \tau')$,
for any $\tau' \ge \tau$.

Conversely, for $\tau>0.5$, a DNN satisfying symmetric confidence-based global robustness for parameters $(\epsilon,\tau)$ does not necessarily satisfy asymmetric confidence-based global robustness for $(\epsilon,\tau)$.
\end{lemma}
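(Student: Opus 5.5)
The first part is a direct unfolding of \Cref{def:conf_rob_old} and \Cref{def:conf_rob_new}. Fix $\boldsymbol{x},\boldsymbol{y}\in\mathcal{I}$ with $|\boldsymbol{x}-\boldsymbol{y}|\leq\boldsymbol{\varepsilon}$, $\conf(f(\boldsymbol{x}))>\tau'$ and $\conf(f(\boldsymbol{y}))>\tau'$. Since $\tau'\geq\tau$, we get $\conf(f(\boldsymbol{x}))>\tau$. The hypothesis of the asymmetric property therefore holds for the pair $(\boldsymbol{x},\boldsymbol{y})$, so $\class(f(\boldsymbol{x}))=\class(f(\boldsymbol{y}))$. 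The confidence assumption on $\boldsymbol{y}$ is simply discarded. This is where the asymmetry makes the implication strictly one-directional.

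For the converse I would give an explicit counterexample. The cleanest choice is a one-dimensional input and a two-class network whose logit difference varies continuously. Take $\mathcal{I}=[0,1]$, $O=2$, and $f(x)=(kx,\,k(1-x))$, which is affine and hence a (trivial) ReLU network with $L=1$ affine-then-identity. If one insists on ReLU layers, it can be realised via $\ReLU(x)$ and $\ReLU(1-x)$ on $[0,1]$. Then $\conf(f(x))=\sigma(k|2x-1|)$, where $\sigma$ is the logistic function. Choose $\tau\in(0.5,1)$ and let $\delta>0$ be the point where $\sigma(k\delta)=\tau$, so that the confidence exceeds $\tau$ iff $|2x-1|>\delta$. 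Choose $k$ large enough that $\delta$ is small, say $\delta<1$.

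Now pick $\varepsilon$ with $\delta/2<\varepsilon<\delta$. Two points of different classes that both have confidence $>\tau$ lie on opposite sides of $1/2$ at distance greater than $\delta/2$ from it, so they are more than $\delta>\varepsilon$ apart. Hence the symmetric property holds. For the asymmetric property, take $x=1/2+\delta/2+\eta$ with small $\eta>0$, which has high confidence, and $y=1/2-\eta'$ with $\eta'>0$ small, so that $y$ has the other class. Their distance is about $\delta/2<\varepsilon$, so this pair is a violation. Here $\tau>0.5$ is exactly what forces the high-confidence regions away from the boundary and creates the gap.

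The main obstacle is bookkeeping rather than substance: fitting the example to the paper's DNN format (ReLU layers, no ties at $x=1/2$, whose confidence is $0.5\leq\tau$ and is harmless) and choosing the constants $k$, $\tau$, $\varepsilon$ so that the two distance inequalities hold simultaneously. I would fix $\tau=0.8$ and $k=10$ concretely and verify the numbers.
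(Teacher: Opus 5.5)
Your proposal is correct and follows essentially the same route as the paper. The first part is the same unfolding of the two definitions. For the converse, the paper also uses a one-dimensional two-class network whose logit difference is linear, with $\varepsilon$ chosen strictly between half the width of the low-confidence band and its full width; it takes $f(x)=(x,0)$ on $\mathbb{R}$ with $\tau=0.75$ and $\varepsilon=1.5$, where the band has half-width $\ln 3$. Your version is a mild strengthening: it covers every $\tau\in(0.5,1)$, and on the box $[0,1]$ it coincides with a genuine ReLU layer, whereas the paper's example is purely affine on $\mathbb{R}$.
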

\begin{proof}
\label{proof:asymm-implies-symm}
Assume the asymmetric confidence-based robustness property holds for parameters $\left(\boldsymbol{\varepsilon},\tau\right)$ for some DNN $f$.
This requires that for any input $\boldsymbol{x}$ classified to class $\mathsf{A}$ with confidence $\conf(f(\boldsymbol{x})) > \tau$, every input $\boldsymbol{y}$ satisfying
$| \boldsymbol{x} - \boldsymbol{y} | \le \boldsymbol{\varepsilon}$ is also classified to $\mathsf{A}$.
Consider a pair of inputs $\boldsymbol{x}, \boldsymbol{y}$ such that
$| \boldsymbol{x} - \boldsymbol{y} | \le \boldsymbol{\varepsilon}$,
$\conf(f(\boldsymbol{x})) > \tau' \geq \tau$, and $\conf(f(\boldsymbol{y})) > \tau' \geq \tau$.
Then we know that $\conf(f(\boldsymbol{x})) > \tau$ as $\tau' \geq \tau$.
From our assumption, it follows that $\boldsymbol{y}$ must be classified to the same class as $\boldsymbol{x}$.
Therefore, the symmetric confidence-based global robustness property holds for threshold
$\tau'$.

To show that symmetric confidence-based global robustness for $(\epsilon,\tau)$ does not necessarily imply asymmetric robustness for $(\epsilon,\tau)$, consider the DNN $f\left(x\right)=\left( 1\quad 0\right)^{\mathrm{T}} x$.
The DNN $f$ has confidence $\conf\left(f\left(x\right)\right)\geq 0.75$ for $\class\left(f\left(x\right)\right)=0$ with $x\in\left[\ln\left(0.75/0.25\right),\infty\right)$ and for $\class\left(f\left(x\right)\right)=1$ with $x\in\left(-\infty,-\ln\left(0.75/0.25\right)\right]$.
Moreover, $f$ classifies $x$ to class $0$ iff $x \geq 0$.
Hence, for $\tau=0.75$ and $\varepsilon=1.5$ the DNN satisfies symmetric global robustness (differing high-confidence classifications are $2\ln\left(0.75/0.25\right)$ apart), but not asymmetric global confidence ($x=1.1\geq\ln\left(0.75/0.25\right)$ is classified as $0$ with high confidence but $-0.4$ (which is $\varepsilon$-close to $x$) is classified as $1$ with low confidence).
\end{proof}
\looseness=-1
Verifying global robustness requires
reasoning about two executions of a DNN over large input regions.
Below, we introduce a zonotope-based approach leveraging differential verification to verify confidence-based robustness properties.

\section{Robustness Verification with Differential Zonotopes}
\label{sec:algorithm}
In prior work, differential zonotope-based verification has been used to verify the \emph{equivalence} of two DNNs w.r.t. a given input space.
Equivalence properties assert that the DNNs' outputs are equivalent when provided with the same inputs.
Hence, the inputs of the two DNNs are assumed \emph{equal} across the two executions while the DNNs themselves \emph{differ}.
Consequently, the algorithm from Teuber \emph{et al.}~\cite{teuber2025revisiting} by construction initializes the input zonotopes as: $\zonotope'=\zonotope''$ and $\zonotope^\Delta=0$

In contrast, global robustness verification attempts to prove a property about two executions of the \emph{same} DNN w.r.t. \emph{differing} inputs.
This raises the question of how this relational property can be encoded for differential zonotope verification in a \emph{sound} and \emph{precise} manner -- as we show in \Cref{subsec:algorithm:failed}, straightforward adaptations of the approach by Teuber \emph{et al.}~\cite{teuber2025revisiting} lead to  unsound or  imprecise results.

Below we assume a box-shaped input space.
We describe possible encodings w.r.t. 1-dimensional affine forms $\affineForm{}$ that can be independently combined to describe multidimensional inputs.
We consider an input interval $\left[l,u\right]$ and perturbation $\varepsilon$ and will also use the center $c=\left(u+l\right)/2$ and the range $r=\left(u-l\right)/2$.
Our objective will be to construct affine forms $\affineForm{}',\affineForm{}'',\affineForm{}^\Delta$ which ensure that $\left(x,y\right)\in\langle(\affineForm{}',\affineForm{}'',\affineForm{}^\Delta)\rangle$ iff $x,y\in\left[l,u\right]$ and $\left(x-y\right)\in\left[-\varepsilon,\varepsilon\right]$.

As concrete values to demonstrate shortcomings of possible encodings, we will consider $l=0.0,\enspace u=1.0, \enspace c=0.5,$ and $r=0.5$ with $\varepsilon=0.1$, corresponding to 
the input region of the DNN from \Cref{fig:nn}.

\subsection{A Tale of Three Failed Encodings}
\label{subsec:algorithm:failed}
\looseness=-1
To provide a better intuition for our encoding, we first describe three encodings which are either semantically unsound or imprecise for zonotope propagation.

\paragraph{Common Generators.}
One may be tempted to initialize $\affineForm{}'=\affineForm{}''=\left(\left(r,0\right),c\right)$ with $\affineForm{}^\Delta=\left(\left(0,\varepsilon\right),0\right)$.
However, while $\left[\zonoLower{\affineForm{}^\Delta},\zonoUpper{\affineForm{}^\Delta}\right]$ would indeed be $\left[-\varepsilon,\varepsilon\right]$, this is unsound, as shown by the following example:

\begin{example}
Initializing the affine forms for our example yields $\affineForm{}'=\affineForm{}''=\left(\left(0.5,0\right),0.5\right)$ and $\affineForm{}^\Delta=\left(\left(0,0.1\right),0\right)$.
We can then compute a bound on the difference between inputs in the first NN $x$ and the second NN $y$ either via the bounds on $\affineForm{}^\Delta$ (yielding $\left[-0.1,0.1\right]$) or via bounds on the following difference:
\[\affineForm{}
'-\affineForm{}''=\left(\left(0.5 - 0.5, 0-0\right),0.5 - 0.5\right) = \left(\left(0,0\right),0\right).
\]
The latter produces bounds $\left[0,0\right]$.
Therefore, we would only propagate inputs with $\left|x-y\right|=0$ through the NN which unsoundly ignores other inputs with $\left|x-y\right|\leq 0.1$ (e.g.\ $x=0,\enspace y=0.1$).
\end{example}

\paragraph{Independent generators for $\affineForm{}'$ and $\affineForm{}''$.}
\looseness=-1
It is a logical consequence of this observation that $\affineForm{}',\affineForm{}''$ require independent generators.
A potential solution would be to use fully independent generator for both NNs (i.e.\ $\affineForm{}'=\left(\left(r,0,0\right),c\right)$ and $\affineForm{}''=\left(\left(0,r,0\right),c\right)$).
Two natural alternatives arise:
We can set $\affineForm{}^\Delta=\left(\left(r,-r,0\right),0\right)$ or we can set $\affineForm{}^\Delta=\left(\left(0,0,\varepsilon\right),0\right)$.
We will explain the shortcomings of both approaches using our example:
\begin{example}
Consider $\affineForm{}'=\left(\left(0.5,0,0\right),0.5\right)$, $\affineForm{}''=\left(\left(0,0.5,0\right),0.5\right)$ and $\affineForm{}^\Delta=\left(\left(0.5,-0.5,0\right),0\right)$:
In this case, $\affineForm{}^\Delta$ would yield the bounds $\left[-0.5-0.5,0.5+0.5\right] = \left[-1,1\right]$ which are much larger than the maximal difference $\left[-\varepsilon,\varepsilon\right]=\left[-0.1,0.1\right]$.
While we could reintroduce this information after propagation by appropriately constraining $\affineForm{}^\Delta$ (e.g. using linear programming), this would make the initial propagation of zonotopes through the NN extremely imprecise.

As an alternative, we could set $\affineForm{}^\Delta=\left(\left(0,0,\varepsilon\right),0\right)$ making $\affineForm{}^\Delta$ fully independent from the other affine forms.
However, the differential propagation algorithm uses expressions such as, e.g., $\affineForm{}^\Delta{} - \lambda \affineForm{}'$ which would result in the affine form $\left(\left(-\lambda 0.5, 0, 0.1\right),-\lambda 0.5\right)$.
Assuming $\lambda=0.5$, the bounds of this affine form would be
\[
\left[-\lambda 0.5-0.1-\lambda 0.5,0.1+\lambda 0.5-\lambda 0.5\right] = \left[-0.6,0.1\right].
\]
As will be shown below, these bounds are less precise than the bounds achievable with our encoding.
The lack of precision stems from the fact that the considered affine forms are fully independent.
\end{example}

\paragraph{An additional generator for $\affineForm{}''$.}
\looseness=-1
From our prior observations, it follows that neither fully dependent nor fully independent generators are desirable.
One may hence consider setting $\affineForm{}'=\left(\left(r,0\right),c\right)$, $\affineForm{}''=\left(\left(r,\varepsilon\right),c\right)$ and $\affineForm{}^\Delta = \left(\left(0,-\varepsilon\right),0\right)$.
While this approach is sound, it over-approximates the second DNN's input domain:

\begin{example}
Reconsidering our example, we would obtain the affine forms $\affineForm{}'=\left(\left(0.5,0\right),0.5\right)$, $\affineForm{}''=\left(\left(0.5,0.1\right),0.5\right)$ and $\affineForm{}^\Delta = \left(\left(0,-0.1\right),0\right)$.
In this case, $\affineForm{}''$ would yield the bounds
\[
\left[-0.5-0.1+0.5,0.5+0.1+0.5\right]=\left[-0.1,1.1\right] \supset \left[0,1\right]
\]
which is a strict superset of the range of interest $\left[l,u\right]$.
Hence, this approach, while sound, \emph{over-approximates} the input range of the second NN.
This can also lead to spurious counterexamples (e.g. if the NN is not robust for $x=0,\enspace y=-0.1<l$)
\end{example}
Notably, this shortcoming \emph{cannot} be mitigated by setting $\affineForm{}'=\left(\left(r-\varepsilon,0\right),c\right)$, $\affineForm{}''=\left(\left(r-\varepsilon,\varepsilon\right),c\right)$:
For our example this would yield the bounds $\left[0.1,0.9\right]\subset \left[0,1\right]$ for $\affineForm{}'$, which is too restrictive and hence unsound.

\paragraph{A Tale of Three Lessons.}
Above, we have seen that we must ensure that the generators of $\affineForm{}',\affineForm{}''$ and $\affineForm{}^\Delta$ are
\begin{enumerate*}
    \item sufficiently independent (to ensure soundness);
    \item sufficiently dependent (to ensure precise propagation of zonotopes);
    \item modeling the perturbation $\varepsilon$ \emph{symbolically} during propagation.
\end{enumerate*}
Based on these observations, we introduce \emph{differential halo zonotopes}, a specialized representation of input perturbations within the differential zonotopes abstract domain which soundly and precisely encodes perturbed input points.

\subsection{Differential Halo Zonotopes}
\label{subsec:algorith:diff_halo}
To achieve the three objectives outlined above simultaneously, we propose \emph{differential halo zonotopes}, which \emph{distribute} the perturbation $\varepsilon$ across two generators that are independent between $\affineForm{}'$ and $\affineForm{}''$, while they share a common primary generator.
Formally, we define a differential halo affine form as follows:
\begin{definition}[Differential Halo Affine Form]
\label{def:algorithm:diff_halo_zono}
For lower and upper bounds $l<u$ and a maximal perturbation $0 < \varepsilon < (u-l)$,
we set $c=\frac{u+l}{2}$ and $r=u-c$.
We then define a \emph{differential halo affine form} as $\left(\affineForm{}',\affineForm{}'',\affineForm{}^\Delta\right)$:
\begin{align*}
    \affineForm{}' &= \left(\left(\left(r-\frac{\varepsilon}{2}\right),\frac{\varepsilon}{2},0\right),c\right) &
    \affineForm{}'' &= \left(\left(\left(r-\frac{\varepsilon}{2}\right),0,\frac{\varepsilon}{2}\right),c\right)\\
    \affineForm{}^\Delta &= \left(\left(0,\frac{\varepsilon}{2},-\frac{\varepsilon}{2}\right),0\right) &
\end{align*}
\end{definition}
\noindent
We will call the first component of $\affineForm{}'$'s and $\affineForm{}''$'s generator vector the \emph{primary} generator while the latter two components are the \emph{secondary} generators.
Differential halo affine forms come with the following guarantee (proof in Isabelle~\cite{DBLP:books/sp/Paulson94}):
\begin{isaLemma}[Exact Encoding of Bounded Input Perturbations]
\label{lem:algorithm:diff_halo_zono}
Let $l,u,\varepsilon\in\mathbb{R}$ be bounds and perturbation such that $0 < \varepsilon < \left(u-l\right)$ and let $\left(\affineForm{}',\affineForm{}'',\affineForm{}^\Delta\right)$ be their differential halo affine form (see \Cref{def:algorithm:diff_halo_zono}), then for all $x,y\in\mathbb{R}$:
\begin{equation*}
    \left(x,y\right) \in \langle\left(\affineForm{}',\affineForm{}'',\affineForm{}^\Delta\right)\rangle
    \text{ iff }
    x,y \in \left[l,u\right] \text{ and } \left(x-y\right) \in \left[-\varepsilon,\varepsilon\right].
\end{equation*}
\end{isaLemma}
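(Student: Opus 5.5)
Unfolding \Cref{def:diff_affine_form_containment} for the three one-dimensional affine forms of \Cref{def:algorithm:diff_halo_zono}, the claim becomes: $x,y\in[l,u]$ and $|x-y|\le\varepsilon$ iff there exist $\epsilon_1,\epsilon_2,\epsilon_3\in[-1,1]$ with
\begin{align*}
x &= \left(r-\tfrac{\varepsilon}{2}\right)\epsilon_1+\tfrac{\varepsilon}{2}\epsilon_2+c,\qquad
y = \left(r-\tfrac{\varepsilon}{2}\right)\epsilon_1+\tfrac{\varepsilon}{2}\epsilon_3+c,\\
x-y &= \tfrac{\varepsilon}{2}\epsilon_2-\tfrac{\varepsilon}{2}\epsilon_3 .
\end{align*}
The third equation follows from the first two, so only the first two constrain anything. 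The hypothesis $0<\varepsilon<u-l=2r$ gives $r-\varepsilon/2>0$, which I will use in both directions. I will prove the two implications separately.

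\emph{Soundness ($\Rightarrow$).} This follows from \Cref{prop:background:zono_bounds}. The bound on $\affineForm{}'$ is $c\pm\left((r-\varepsilon/2)+\varepsilon/2\right)=c\pm r=[l,u]$, and likewise for $\affineForm{}''$. The bound on $\affineForm{}^\Delta$ is $\pm\varepsilon$. This step is routine.

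\emph{Completeness ($\Leftarrow$).} Given $x,y$ satisfying the conditions, I will construct witnesses. Write $a=r-\varepsilon/2>0$ and $h=\varepsilon/2$, and shift to $x_0=x-c$ and $y_0=y-c$, so that $x_0,y_0\in[-r,r]$, $|x_0-y_0|\le 2h$, and $a+h=r$. We need a common value $p=a\epsilon_1\in[-a,a]$ such that both $x_0-p$ and $y_0-p$ lie in $[-h,h]$, i.e.
\[
p\in[x_0-h,x_0+h]\cap[y_0-h,y_0+h]\cap[-a,a].
\]
The first two intervals intersect, since $|x_0-y_0|\le 2h$, in the interval $[\max(x_0,y_0)-h,\min(x_0,y_0)+h]$. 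This interval meets $[-a,a]$ because its upper end satisfies $\min(x_0,y_0)+h\ge -r+h=-a$, and its lower end satisfies $\max(x_0,y_0)-h\le r-h=a$. It is nonempty, and $[-a,a]$ is nonempty with $-a<a$, so the standard criterion for two intervals to intersect applies. Concretely, I will take
\[
p=\operatorname{clamp}_{[-a,a]}\!\left(\tfrac{x_0+y_0}{2}\right).
\]
Then I set $\epsilon_1=p/a$, $\epsilon_2=(x_0-p)/h$ and $\epsilon_3=(y_0-p)/h$, and verify that each lies in $[-1,1]$.

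The main obstacle is this case analysis on the clamp. The midpoint $m=(x_0+y_0)/2$ automatically satisfies $|x_0-m|,|y_0-m|\le h$, so the unclamped case is immediate. In the clamped case, say $m>a$, we take $p=a$ and must check that $x_0-a\le h$ (which holds since $x_0\le r$) and that $x_0-a\ge -h$. The second inequality holds because $x_0\ge 2m-y_0>2a-r=a-h$. The case $m<-a$ is symmetric. This bookkeeping is exactly what an Isabelle proof must discharge, so I would state the intersection argument as a separate auxiliary fact, and then the remaining goals reduce to linear arithmetic.
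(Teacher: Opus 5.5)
Your proof is correct. Soundness is the zonotope bound of \Cref{prop:background:zono_bounds} once you note $r-\varepsilon/2>0$, the $\Delta$-equation is implied by the other two, and your clamped-midpoint witness handles completeness, with a case analysis that checks out (in the clamped case $x_0 = 2m-y_0 > 2a-r = a-h$). The paper gives no written proof, only the Isabelle formalization and the halo intuition of \Cref{fig:differential-halo-zonotope-intuition}: the primary generator picks a common point in $[l+\varepsilon/2,\,u-\varepsilon/2]$ and the independent secondary generators supply the two $\varepsilon/2$-halos. Your argument is exactly the formal content of that picture.
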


\begin{figure}[t]
\centering
\begin{tikzpicture}[x=0.4cm,y=0.4cm,
  box/.style={line width=0.9pt},
  axis/.style={->, line width=0.6pt},
  annot/.style={font=\small},
  zonoCenter/.style={fill=red!80!white,draw=none},
  zonoCommon/.style={fill=blue!20!white,draw=blue!80!white,line width=1.2pt},
  zonoCommonPoint/.style={fill=blue!80!white,draw=none},
]

\coordinate (L) at (0,0);
\coordinate (U) at (8,5);

\draw[axis] (-0.4,0) -- (8.7,0) node[anchor=west,annot] {$x_1$};
\draw[axis] (0,-0.4) -- (0,5.7) node[left,annot] {$x_2$};

\pic [zhalo/outline=false] at (0,0) {zhalo={w 8 h 5 inset 1}};

\draw[box,fill=none] (L) rectangle (U);

\fill[color=yellow!50!white, opacity=0.5]   (4.5,1.5) rectangle ++(2,2);
\fill[pattern={
Lines[angle=-45, distance=12pt,  line width=6pt]%
}, pattern color=orange!50!white, opacity=0.5
]  (4.5,1.5) rectangle ++(2,2);

\draw[zonoCommonPoint] (5.5,2.5) circle (0.1);

\node [annot,anchor=north east] at (0,0) {$\left(l_1,l_2\right)$};
\node [annot,anchor=north] at (8,0) {$u_1$};
\node [annot,anchor=east] at (0,5) {$u_2$};

\node [annot,anchor=north] at (1.5,0) {$(l_1 + \frac{\varepsilon}{2})$};
\node [annot,anchor=north] at (6.5,0) {$(u_1 - \frac{\varepsilon}{2})$};
\node [annot,anchor=east] at (0,1) {$(l_2 + \frac{\varepsilon}{2})$};
\node [annot,anchor=east] at (0,4) {$(u_2 - \frac{\varepsilon}{2})$};

\draw [-,dashed] (0,1) -- (8,1);
\draw [-,dashed] (0,4) -- (8,4);
\draw [-,dashed] (1,0) -- (1,5);
\draw [-,dashed] (7,0) -- (7,5);

\draw [<->,dashed] (4.5,3.5) -- node[anchor=south] {$\varepsilon$} (6.5,3.5);

\draw [<->,dashed] (4.5,3.5) -- node[anchor=east] {$\varepsilon$} (4.5,1.5);

\end{tikzpicture}
\caption{Intuition for \emph{Differential Halo Zonotopes}}
\label{fig:differential-halo-zonotope-intuition}
\end{figure}

\begin{example}
We can observe the effect of differential halo zonotopes on our running example:
Instantiating the affine forms as described in \Cref{def:algorithm:diff_halo_zono} yields
$\affineForm{}'=\left(\left(0.45,0.05,0\right),0.5\right),\enspace
\affineForm{}''=\left(\left(0.45,0,0.05\right),0.5\right),
\affineForm{}^\Delta=\left(\left(0,0.05,0.05\right),0\right)$.
The bounds for $\affineForm{}'$ and $\affineForm{}''$ then are $\left[0,1\right]$ and the bounds of $\affineForm{}^\Delta$ are $\left[-0.1,0.1\right]$.
Additionally, we now preserve dependencies when adding affine forms:
\[
\affineForm{}^\Delta - \lambda \affineForm{}' = 
\left(
\left(
-\lambda 0.45,
(1-\lambda) 0.05,
0.05
\right),
-\lambda 0.5
\right).
\]
For $\lambda=0.5$ the affine form above yields bounds $\left[-0.55,0.05\right]$ which is tighter than $\left[-0.6,0.1\right]$ produced by the independent generator approach.
This difference in precision \emph{increases} with $\varepsilon$.
\end{example}

Multiple differential halo affine forms, can be independently combined to a differential halo zonotope describing a multi-dimensional (box) input space as usual.
\Cref{fig:differential-halo-zonotope-intuition} provides an intuition for the idea behind differential halo zonotopes for the two-dimensional case:
By fixing the primary generators of $\zonotope'$ and $\zonotope''$ to a fixed value, the blue dot in \Cref{fig:differential-halo-zonotope-intuition} can be reached.
We can then reach any point in the yellow-orange shaded area \emph{independently} with $\zonotope'$ and $\zonotope''$ by fixing the secondary generator.
This enables us to generate a perturbation of up to $\varepsilon$ (distributed across the secondary generators of $\zonotope'$ and $\zonotope''$).
At the same time, the perturbations ensure that the full input space between $l$ and $u$ is reachable, effectively producing a \emph{halo} of reachable points around each point reachable via the primary generators.

\begin{algorithm}
    \caption{Verification of Global Robustness via Differential Halo Zonotopes}
    \label{algo:algorithm}
    \begin{algorithmic}
    \Require Neural Network $f$, 
            Lower and Upper Bounds $\boldsymbol{l},\boldsymbol{u} \in \mathbb{R}^I$,\\
            Dimension-wise maximal perturbations $\boldsymbol{\varepsilon}\in\mathbb{R}^I$,\\
            Confidence Bounds $\tau_1,\tau_2\in\left[0.5,1\right)$\\
            ($\tau_2=\tau_1$ for symmetric property; $\tau_2=0.5$ for asymmetric property)
    \Procedure{$\textsc{Verify}_\Delta$}{$f,\boldsymbol{l},\boldsymbol{u},\boldsymbol{\varepsilon},\tau_1,\tau_2$}
        \State $Q \leftarrow \left\{\zonotope_{\text{in}}\right\} = \textsc{InitZono}\left(\boldsymbol{l},\boldsymbol{u},\boldsymbol{\varepsilon}\right)$
        \Comment{See \Cref{def:algorithm:diff_halo_zono} and \Cref{lem:algorithm:diff_halo_zono}}
        \While{$Q \ne \emptyset \land \neg\texttt{timeout}$}
            \State $\zonotope_{\text{cur}} \leftarrow \textsc{Pop}\left(Q\right)$
            \State $\zonotope_{\text{out}} \leftarrow \textsc{Reach}_\Delta\left(f,\zonotope{}_{\text{cur}}\right)$
            \Comment{See Algorithm 1 in \cite{teuber2025revisiting}}
            \State $r \leftarrow \textsc{CheckProp}\left(f,\zonotope_{\text{out}},\tau_1,\tau_2\right)$
            \Comment{See \Cref{lem:algorith:lp_soundness}}
            \If{r = \texttt{concrete cex}}
                \Return \texttt{UNSAFE}
            \EndIf{}
            \If{r = \texttt{unknown}}
                \State $\zonotope_1,\zonotope_2 \leftarrow \textsc{Refine}\left(\zonotope{}_{\text{cur}}\right)$
                \Comment{See \Cref{lem:algorithm:soundess_spits}}
                \State $Q \leftarrow Q \cup \left\{\zonotope_1,\zonotope_2\right\}$
            \EndIf{}
        \EndWhile
        \If{$Q = \emptyset$}
            \Return \texttt{SAFE}
        \Else{}
            \Return \texttt{UNKNOWN}
            \Comment{Timout reached}
        \EndIf{}
    \EndProcedure
    \end{algorithmic}
\end{algorithm}

\subsection{Verification Algorithm}
We provide an overview of our verification algorithm in \Cref{algo:algorithm}:
First, we initialize a differential halo zonotope $\zonotope_{\text{in}}$ which exactly represents the inputs in our bounding box $\left[l,u\right]\subset\mathbb{R}^I$ with perturbations in $\left[-\varepsilon,\varepsilon\right]$ via the insight from \Cref{lem:algorithm:diff_halo_zono} and add it to $Q$.
Subsequently, we run a loop which pops elements from $Q$, and propagates them through the DNN.
The propagation is visualized in \Cref{fig:algorithm:diffzono}: By propagating a bound on the \emph{difference} between executions in lock-step with the individual zonotopes, we obtain a bound on the difference between DNN outputs.
While zonotopes incur over-approximation errors through $\ReLU$ nodes (indicated by the dotted truly reachable region), the difference bound $\zonotope^\Delta$ is tighter than the bound obtained via $\zonotope' - \zonotope''$.
We then check the confidence-based robustness property on the resulting zonotope $\zonotope_{\text{out}}$ (see \Cref{subsubsec:algorith:property}).
This procedure either proves the absence of counterexamples for the zonotope, finds a concrete counterexample, or returns unknown.
In the first case, we can discard $\zonotope_{\text{cur}}$ as it was proven safe.
In the second case, we have a concrete violation of the property and abort the algorithm.
In the final case, we refine the zonotope to make the propagation more precise (see \Cref{subsubsec:algorith:refinement}).
We now describe the approach in more detail and prove soundness.
\definecolor{lightgray}{RGB}{180,180,180}

\begin{figure*}[t]
\centering
\resizebox{0.75\linewidth}{!}{
\begin{tikzpicture}[
    scale=0.95,
    zonotope/.style={draw,thick,fill=white},
    reachset/.style={fill=lightgray,draw=none},
    box/.style={fill=white,draw=black},
    diff/.style={draw,thick,fill=white},
    layer/.style={draw, rounded corners, minimum width=2.6cm, minimum height=0.8cm,align=center},
    arrow/.style={->, thick},
    netTransform/.style={align=center, draw, rounded corners, dashed},
    every node/.style={font=\small}
]

\node[anchor=east] at (-1.2,3.2) {$\zonotope'$};
\node[anchor=east] at (-1.2,1.6) {$\zonotope''$};
\node[anchor=east] at (-1.2,0) {$\zonotope^\Delta$};

\draw [box] (-0.8, 3.8) rectangle (0.8, 2.6);
\draw [box] (-0.8, 2.2) rectangle (0.8, 1.0);
\draw [box] (-0.5, 0.5) rectangle (0.5, -0.5);

\pic at (-0.4,2.9) {zhalo={w 0.8 h 0.6 inset 0.125}};

\pic at (-0.4,1.3) {zhalo={w 0.8 h 0.6 inset 0.125}};

\pic at (-0.25,-0.25) {haloonly={w 0.5 h 0.5}};

\node[below,align=center] at (1,-0.55) {Halo $\zonotope^\Delta$ encodes pertubation $\boldsymbol{\varepsilon}$};

\node (affineTransform1) [netTransform] at (3,3.2) {$\boldsymbol{x}' = W \boldsymbol{x} + \boldsymbol{b}$};
\node (affineTransform2) [netTransform] at (3,1.6) {$\boldsymbol{y}' = W \boldsymbol{y} + \boldsymbol{b}$};
\node (affineTransformd) [netTransform] at (3,0.0) {$\Delta = \boldsymbol{x}' - \boldsymbol{y}'$};

\draw [arrow] (0.8,3.2) -- (affineTransform1);
\draw [arrow] (0.8,1.6) -- (affineTransform2);
\draw [arrow] (0.5,0.0) -- (affineTransformd.west);
\draw [arrow] (0.8,1.6) -- (affineTransformd.west);

\draw [box] (5.2, 3.8) rectangle (6.8, 2.6);
\draw [box] (5.2, 2.2) rectangle (6.8, 1.0);
\draw [box] (5.5, 0.5) rectangle (6.5, -0.5);

\draw [arrow] (affineTransform1) -- (5.2,3.2);
\draw [arrow] (affineTransform2) -- (5.2,1.6);
\draw [arrow] (affineTransformd) -- (5.5,0.0);

\pic [anchor=center,yshift=-0.4cm,rotate=80] at (6,3.2) {zhalo={w 0.8 h 0.7 inset 0.125}};

\pic [anchor=center,yshift=-0.4cm,rotate=80] at (6.3,1.6) {zhalo={w 0.8 h 0.7 inset 0.125}};

\pic [rotate=20] at (5.9,-0.25) {haloonly={w 0.5 h 0.5}};

\node (reluTransform1) [netTransform] at (9,3.2) {$\boldsymbol{x}'' = \ReLU\left(\boldsymbol{x}'\right)$};
\node (reluTransform2) [netTransform] at (9,1.6) {$\boldsymbol{y}'' = \ReLU\left(\boldsymbol{y}'\right)$};
\node (reluTransformd) [netTransform] at (9,0.0) {$\Delta' = \boldsymbol{x}'' - \boldsymbol{y}''$};

\draw [arrow] (6.8,3.2) -- (reluTransform1);
\draw [arrow] (6.8,1.6) -- (reluTransform2);
\draw [arrow] (6.5,0.0) -- (reluTransformd.west);
\draw [arrow] (6.8,3.2) -- (reluTransformd.west);
\draw [arrow] (6.8,1.6) -- (reluTransformd.west);

\draw [box] (11.2, 3.8) rectangle (12.8, 2.6);
\draw [box] (11.2, 2.2) rectangle (12.8, 1.0);
\draw [box] (11.5, 0.5) rectangle (12.5, -0.5);

\draw [arrow] (reluTransform1) -- (11.2,3.2);
\draw [arrow] (reluTransform2) -- (11.2,1.6);
\draw [arrow] (reluTransformd) -- (11.5,0.0);

\pic (Z1) [anchor=center,yshift=-0.4cm,rotate=80,zhalo/outline=false] at (12,3.2) {zhalo={w 0.8 h 0.7 inset 0.125}};

\draw[densely dotted, thick]
  ($(Z1-L)$) -- ($(Z1-U |- Z1-L)+(-0.1,0.175)$) -- ($(Z1-U)$) -- cycle;

\pic (Z2) [anchor=center,yshift=-0.4cm,rotate=80,zhalo/outline=false] at (12.5,1.6) {zhalo={w 0.8 h 0.7 inset 0.125}};
\draw[densely dotted, thick]
  ($(Z2-L)$) -- ($(Z2-U |- Z2-L)+(-0.1,0.175)$) -- ($(Z2-U)$) -- ($(Z2-U)+(0.2,-0.05)$) -- cycle;

\pic (ZD) [rotate=20,haloonly/outline=false] at (11.7,-0.35) {haloonly={w 0.8 h 0.5}};

\draw[densely dotted, thick]
  ($(ZD-L)$) -- ($(ZD-U |- ZD-L)+(0.15,0.3)$) -- ($(ZD-U)$) -- ($(ZD-L)+(-0.05,0.15)$) -- cycle;

\end{tikzpicture}
}
\caption{Visualization of robustness verification with differential halo zonotopes:
Dotted border represents truly reachable points, halo zonotope (blue and yellow/orange) represents computed bounds. Bounding the difference in secondary generators via $\zonotope^\Delta$ yields tighter bounds on the difference between executions.}
\label{fig:algorithm:diffzono}
\end{figure*}

\subsubsection{Property Verification.}
\label{subsubsec:algorith:property}
To verify (symmetric) confidence-based global robustness, we encode the bounds obtained via $\textsc{Reach}_\Delta$ as an LP problem that proves the absence of adversarial attacks with sufficient confidence.
Using the $\mathrm{softmax}$-approximation by Teuber \emph{et al.}~\cite{teuber2025revisiting}, we encode the absence of attacks where $j$ is maximal in the first NN and $k$ is maximal in the second NN as follows:

\begin{definition}[Symmetric Confidence LP Formulation]
\label{def:algorithm:confidence_lp}
Assume $\tau_1,\tau_2 \in \left[0.5,1\right)$ and let 
\[
\zonotope_{\text{out}} = \left(\zonotope_{\text{out}}',\zonotope_{\text{out}}'',\zonotope_{\text{out}}^\Delta\right)
=\left(\left(G',\boldsymbol{c}'\right),\left(G'',\boldsymbol{c}''\right),\left(G^\Delta,\boldsymbol{c}^\Delta\right)\right)
\]
be some differential zonotope (with aligned matrices).
For $j,k\in\left[1,O\right]$ with $j \neq k$ we define the \emph{Symmetric Confidence LP} as follows:
\begin{align}
    G'_l \boldsymbol{x} + \boldsymbol{c}'_l + \ln\left({\tau_1}/{\left(1-\tau_1\right)}\right) &\leq 
    G'_j \boldsymbol{x} + \boldsymbol{c}'_j & \text{for }l \neq j
    \label{eq:algorith:confidence_lp:net1}\\
    G''_l \boldsymbol{x} + \boldsymbol{c}''_l + \ln\left({\tau_2}/{\left(1-\tau_2\right)}\right) &\leq
    G''_k \boldsymbol{x} + \boldsymbol{c}''_k & \text{for } l \neq k
    \label{eq:algorith:confidence_lp:net2}\\
    G^\Delta\boldsymbol{x} + \boldsymbol{c}^\Delta &= \rlap{
    $\left(G'\boldsymbol{x} + \boldsymbol{c}'\right) - \left(G''\boldsymbol{x} + \boldsymbol{c}''\right)$}
    \label{eq:algorith:confidence_lp:diff}
\end{align}
\end{definition}
\noindent
The constraints (\ref{eq:algorith:confidence_lp:net1}) and (\ref{eq:algorith:confidence_lp:net2}) assert a distance larger than $\ln\left({\tau_i}/{\left(1-\tau_i\right)}\right)$ between the maximal output ($j$ or $k$) and the second largest output, which over-approximates high-confidence outputs.
Constraint (\ref{eq:algorith:confidence_lp:diff}) bounds the difference between the two NN outputs with $\zonotope^\Delta$.
We can then verify global robustness with this LP (for asymmetric robustness choose $\tau_2=0.5$; see proof on \cpageref{proof:algorith:lp_soundness}):
\begin{lemmaE}[Soundness of Symmetric Confidence LP][end,restate,text link={}]
\label{lem:algorith:lp_soundness}
Assume $\tau_1,\tau_2 \in \left[0.5,1\right)$ and let $\zonotope_{\text{out}}$ be a differential zonotope returned for $\textsc{Reach}_\Delta\left(f,\zonotope_{\text{in}}\right)$.
If for all $j,k\in\left[1,O\right]$ with $j \neq k$ the symmetric confidence LP is infeasible, then for all $\left(\boldsymbol{x},\boldsymbol{y}\right) \in \langle \zonotope_{\text{in}} \rangle$ (where $\tau_i^*$ is $0$ if $\tau_i=0.5$ and otherwise $\tau_i^*=\tau_i$ for $i \in \left\{1,2\right\}$):
\begin{align*}
    \mathrm{conf}\left(f\left(\boldsymbol{x}\right)\right) > \tau_1^* \land
    \mathrm{conf}\left(f\left(\boldsymbol{y}\right)\right) > \tau_2^* &
    \rightarrow 
    \\ \mathrm{class}\left(f\left(\boldsymbol{x}\right)\right) = 
    \mathrm{class}\left(f\left(\boldsymbol{y}\right)\right),
\end{align*}
\end{lemmaE}
\begin{proofE}
\label{proof:algorith:lp_soundness}
\sloppy
Note that via \cite[Theorem 2 and Corollary 3]{teuber2025revisiting}, $\zonotope_{\text{out}}$ over-approximates the behavior of $f$ for the two inputs $\left(\boldsymbol{x},\boldsymbol{y}\right)\in\langle\zonotope_{\text{in}}\rangle$, i.e. for $\boldsymbol{v}\in\mathbb{R}^n$ we know that
$f\left(\zonotope_{\text{in}}'\left(\boldsymbol{v}\right)\right) \in
\langle\zonotope_{\text{out}}'\left(\boldsymbol{v}\right)\rangle$,
$f\left(\zonotope_{\text{in}}''\left(\boldsymbol{v}\right)\right) \in
\langle\zonotope_{\text{out}}''\left(\boldsymbol{v}\right)\rangle$, and finally
$\left(f\left(\zonotope_{\text{in}}'\left(\boldsymbol{v}\right)\right) - 
f\left(\zonotope_{\text{in}}''\left(\boldsymbol{v}\right)\right)\right) \in
\langle\zonotope_{\text{out}}^\Delta\left(\boldsymbol{v}\right)\rangle$ with the \emph{same} generator values $\boldsymbol{\epsilon}$.
Hence, the points reachable for $f\left(\zonotope_{\text{in}}''\left(\boldsymbol{v}\right)\right)$ can also be described by the zonotope $\tilde{\zonotope}_{\text{out}}'' = \left(G'-G^\Delta,\boldsymbol{c}' - \boldsymbol{c}^\Delta\right)$ (assuming aligned matrices).
Thus, any reachable values of the two executions of $f$ satisfy the constraint ``$\zonotope^\Delta = \zonotope' - \zonotope''$'', formally $G^\Delta\boldsymbol{x} + \boldsymbol{c}^\Delta = 
    \left(G'\boldsymbol{x} + \boldsymbol{c}'\right) - \left(G''\boldsymbol{x} + \boldsymbol{c}''\right)$.
The constraint $G'_l \boldsymbol{x} + \boldsymbol{c}'_l + \ln\left({\tau_1}/{\left(1-\tau_1\right)}\right) \leq 
G'_j \boldsymbol{x} + \boldsymbol{c}'_j$ (for $l \neq j$) then over-approximates the region where $f\left(\zonotope'_{\text{in}}\left(\boldsymbol{v}\right)\right)$ has maximum $j$ with a gap of at least $\ln\left({\tau_1}/{\left(1-\tau_1\right)}\right)$.
As per \cite[Lemma 4]{teuber2025revisiting}, this over-approximates the region for which $\mathrm{softmax}_j\left(G'\boldsymbol{x} + \boldsymbol{c}'\right) \geq \tau_1$.
With the same argument, 
the constraint
$G''_l \boldsymbol{x} + \boldsymbol{c}''_l + \ln\left({\tau_2}/{\left(1-\tau_2\right)}\right) \leq 
G''_k \boldsymbol{x} + \boldsymbol{c}''_k$
then over-approximates the region for which
$\mathrm{softmax}_j\left(G''\boldsymbol{x} + \boldsymbol{c}''\right) \geq \tau_2$.
Hence, for $k \ne j$, the infeasibility of the LP from \Cref{def:algorithm:confidence_lp} implies there exist no $\boldsymbol{v}$ such that $\mathrm{conf}\left(f\left(\zonotope_{\text{in}}'\left(\boldsymbol{v}\right)\right)\right) > \tau_1$, $\mathrm{conf}\left(f\left(\zonotope_{\text{in}}''\left(\boldsymbol{v}\right)\right)\right) > \tau_2$, $\mathrm{class}\left(f\left(\zonotope_{\text{in}}'\left(\boldsymbol{v}\right)\right)\right) = j$, and $\mathrm{class}\left(f\left(\zonotope_{\text{in}}''\left(\boldsymbol{v}\right)\right)\right) = k$.
As an immediate consequence, there are no inputs $\left(\boldsymbol{x},\boldsymbol{y}\right) \in \langle\zonotope_{\text{in}}\rangle$ such that $\mathrm{conf}\left(f\left(\boldsymbol{x}\right)\right) > \tau_1$, $\mathrm{conf}\left(f\left(\boldsymbol{y}\right)\right) > \tau_2$, $\mathrm{class}\left(f\left(\boldsymbol{x}\right)\right) = j$, and $\mathrm{class}\left(f\left(\boldsymbol{y}\right)\right) = k$.
Hence, if the LP infeasibility can be shown for all $j,k\in\left[1,O\right]$ with $j\neq k$, then we get the implication of \Cref{lem:algorith:lp_soundness}.
Regarding the special case of $\tau_i=0.5$, note that $\ln\left(0.5/\left(1-0.5\right)\right)=\ln\left(1\right)=0$.
Hence, for $\tau_i=0.5$ we impose no constraint on the output of the first/second execution beyond maximality of output dimension $j$/$k$.
Hence, in this case the class invariance across the two executions holds for any confidence threshold.
\end{proofE}
The \textsc{CheckProp} from \Cref{algo:algorithm} then consists of constructing the  $\mathcal{O}\left(O^2\right)$ LP problems and checking for infeasibility.
To speed up verification, we cache already proven cases.
Additionally, we first construct an LP with only \Cref{eq:algorith:confidence_lp:net1,eq:algorith:confidence_lp:diff} and optimize the violation of \Cref{eq:algorith:confidence_lp:net2} for dimension $j$ (which is maximal in the first DNN).
Then, we only construct the full LP if dimension $j$ is feasible.
When solving the full LP, we optimize output dimension $k$ of the second DNN to increase the likelihood of finding counterexamples.
If a concrete counterexample (cross-checked via execution of $f$) is found, \textsc{CheckProp} returns \texttt{concrete cex}, if all LPs are infeasible it returns \texttt{safe}, otherwise it returns \texttt{unknown}.

\subsubsection{Zonotope Refinement.}
\label{subsubsec:algorith:refinement}
While the over-approximations of zonotopes ensure soundness, i.e.\ whenever we prove a DNN to be robust it is indeed robust, they can compromise completeness, i.e.\ there may be instances where we cannot verify the robustness of a DNN that is indeed robust.
To combat this challenge, many approaches employ \emph{input splitting} which partitions the input space into two regions that are approximated and verified independently.

While this approach can reduce over-approximation error and hence improve the share of verifiable DNNs, special care is required when refining global robustness queries.
Since robustness properties consider pairs of inputs that differ by up to $\varepsilon$, naively splitting input spaces of both DNNs simultaneously can be unsound as this omits adversarial examples that straddle the splitting boundary.
Instead of splitting the input space, we propose to split the differential halo zonotope directly to enable sound refinement by splitting \emph{primary and secondary generators} instead of splitting {input dimensions}:
\begin{definition}[Generator Split]
\label{def:algorithm:diff_halo_generator_split}
Let $\left(\affineForm{}',\affineForm{}'',\affineForm{}^\Delta\right)$
be a differential affine form, where each affine form is of the shape $\affineForm{} = \left((g_1,g_2,g_3),c\right)$.
The \emph{generator split} of the differential affine form along generator $j \in \left[1,3\right]$ yields two
differential halo affine forms
$
\affineForm{}^- = \left(\affineForm{}'^{-},\affineForm{}''^{-},\affineForm{}^{\Delta -}\right)$ and $
\affineForm{}^+ = \left(\affineForm{}'^{+},\affineForm{}''^{+},\affineForm{}^{\Delta +}\right)
$
defined by replacing the $j$-th generator $g_j$ with $\hat{g}_j$ and center $c$ with $\hat{c}$ in each component
$\affineForm{}', \affineForm{}'', \affineForm{}^\Delta$ as follows while leaving all other generators unchanged:
\begin{align*}
\hat{g}_j &\coloneqq
\;\frac{g_j}{2}&
\hat{c} & \coloneqq
\begin{cases}
c - \frac{g_j}{2} & \text{in } \affineForm{}^-\\[0.3em]
c + \frac{g_j}{2} & \text{in } \affineForm{}^+
\end{cases}
\end{align*}
\end{definition}
\noindent
Each tuple of inputs $(x,y)$ contained in the original zonotope is then by construction contained in (at least) one of the resulting zonotopes (proof in Isabelle~\cite{DBLP:books/sp/Paulson94}):
\begin{isaLemma}[Soundness of Generator Splits]
\label{lem:algorithm:soundess_spits}
\looseness=-1
Let $\affineForm{} = \left(\affineForm{}',\affineForm{}'',\affineForm{}^\Delta\right)$ be a differential affine form.
If we split $\affineForm{}$ along $j \in \left[1,3\right]$ yielding
$
\affineForm{}^- = \left(\affineForm{}'^{-},\affineForm{}''^{-},\affineForm{}^{\Delta -}\right),
\affineForm{}^+ = \left(\affineForm{}'^{+},\affineForm{}''^{+},\affineForm{}^{\Delta +}\right)
$,
then for any $\left(x,y\right) \in \langle\affineForm{}\rangle$ we get
$\left(x,y\right) \in \langle\affineForm{}^{-}\rangle$ or $\left(x,y\right) \in \langle\affineForm{}^{+}\rangle$.
\end{isaLemma}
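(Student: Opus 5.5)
The argument is a witness construction: take the generator vector that realizes $(x,y)$ in the original form, look at its $j$-th coordinate, and rescale that coordinate into the half of $[-1,1]$ it lies in. By \Cref{def:diff_affine_form_containment}, $(x,y)\in\langle\affineForm{}\rangle$ means there is $\boldsymbol{\epsilon}=(\epsilon_1,\epsilon_2,\epsilon_3)\in[-1,1]^3$ with $\affineForm{}'(\boldsymbol{\epsilon})=x$, $\affineForm{}''(\boldsymbol{\epsilon})=y$ and $\affineForm{}^\Delta(\boldsymbol{\epsilon})=x-y$. We must produce $\hat{\boldsymbol{\epsilon}}\in[-1,1]^3$ that witnesses membership in either $\langle\affineForm{}^-\rangle$ or $\langle\affineForm{}^+\rangle$. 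This witness has to be the \emph{same} for all three components $\affineForm{}',\affineForm{}'',\affineForm{}^\Delta$.

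We case split on the sign of $\epsilon_j$. If $\epsilon_j\le 0$, we use $\affineForm{}^-$ and set $\hat\epsilon_j := 2\epsilon_j+1$. If $\epsilon_j\ge 0$, we use $\affineForm{}^+$ and set $\hat\epsilon_j := 2\epsilon_j-1$. In both cases all other coordinates are left unchanged, $\hat\epsilon_i:=\epsilon_i$ for $i\neq j$. Since $\epsilon_j\in[-1,0]$ gives $2\epsilon_j+1\in[-1,1]$, and $\epsilon_j\in[0,1]$ gives $2\epsilon_j-1\in[-1,1]$, we have $\hat{\boldsymbol{\epsilon}}\in[-1,1]^3$ in either case.

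Next we check that each component evaluates to the same value as before. Take any of the three affine forms $((g_1,g_2,g_3),c)$. In the minus case its split version evaluates at $\hat{\boldsymbol{\epsilon}}$ to
\[
\sum_{i\neq j} g_i\epsilon_i + \tfrac{g_j}{2}(2\epsilon_j+1) + c - \tfrac{g_j}{2} \;=\; \sum_i g_i\epsilon_i + c .
\]
The plus case is symmetric:
\[
\sum_{i\neq j} g_i\epsilon_i + \tfrac{g_j}{2}(2\epsilon_j-1) + c + \tfrac{g_j}{2} \;=\; \sum_i g_i\epsilon_i + c .
\]
The identity depends only on the affine-form shape and not on the particular coefficients. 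It therefore applies uniformly to $\affineForm{}'$, $\affineForm{}''$ and $\affineForm{}^\Delta$, including $\affineForm{}^\Delta$ whose center is $0$ and whose $j$-th generator may be $0$. Consequently $\affineForm{}'^{\pm}(\hat{\boldsymbol\epsilon})=x$, $\affineForm{}''^{\pm}(\hat{\boldsymbol\epsilon})=y$ and $\affineForm{}^{\Delta\pm}(\hat{\boldsymbol\epsilon})=x-y$, which proves the claim.

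The only delicate point is that \Cref{def:algorithm:diff_halo_generator_split} applies the split to all three components using one common index $j$. This is exactly what lets one shared witness $\hat{\boldsymbol\epsilon}$ work for all three forms at once, so the correlation between them is preserved. The boundary value $\epsilon_j=0$ falls into both cases, which is harmless because the statement only asks for membership in at least one of the two halves. Beyond that, the proof is a short algebraic computation, and it is well suited to mechanization in Isabelle.
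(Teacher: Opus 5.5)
Your proof is correct. It is the by-construction argument the paper relies on; the paper gives no written proof and only mechanizes the lemma in Isabelle. You pick the half according to the sign of $\epsilon_j$, rescale that one coordinate affinely back onto $[-1,1]$, and the compensating center shift $\pm g_j/2$ makes every component evaluate to its original value. Because the split uses a single common index $j$ for $\affineForm{}'$, $\affineForm{}''$ and $\affineForm{}^\Delta$, one shared witness serves all three forms at once, as you note.
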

\noindent
In contrast to classical input splitting, splitting the generators of a differential halo zonotope creates a small (and in the number of splits decreasing) amount of overlap between the resulting input spaces.
This overlap is necessary to preserve coverage of all possible $\varepsilon$-perturbations w.r.t. all possible inputs.

\paragraph{Input Splitting Heuristic}
\looseness=-1
We adapt the input splitting heuristic from Teuber \emph{et al.}~\cite{teuber2025revisiting}:
Previously, the heuristic computed the \emph{influence} of each input dimension by initializing an $I$-dimensional unit matrix representing the influence of each generator (column) on each input dimension (row).
Subsequently, new columns were appended for each newly introduced generator.
We modify the heuristic to compute the influence of initial primary and secondary generators by introducing additional rows and columns for secondary generators in the initial influence matrix.
Because primary and secondary generators typically operate on different magnitudes, we normalize the influence matrix by their average scale.
We also extend the heuristic computation rule of \cite[Apx. D]{teuber2025revisiting} with an additional normalization by the zonotope's output range to improve numerical stability (previously, the heuristic typically reached values $> 10^7$).
The primary or secondary generator to split is then chosen by the heuristic's maximal influence value.

\subsubsection{Soundness.}
Using the results established above together with the results from  \Cref{subsec:algorith:diff_halo},
we get the following soundness result for \Cref{algo:algorithm} (note that the algorithm may also return \texttt{UNKNOWN} in case of a timeout):
\begin{theoremE}[Soundness of $\textsc{Verify}_\Delta$][end,restate,text link={}]
\label{thm:algorithm:overall_soundness}
If $\textsc{Verify}_\Delta$ returns \texttt{SAFE} for input range $\left[\boldsymbol{l},\boldsymbol{u}\right]$ and perturbation $\boldsymbol{\varepsilon}$ then $f$ satisfies:
\begin{enumerate*}
    \item symmetric confidence-based robustness for confidence $\tau_1$ if $\tau_2=\tau_1$;
    \item asymmetric confidence-based robustness for confidence $\tau_1$ if $\tau_2 = 0.5$.
\end{enumerate*}
\texttt{UNSAFE} implies it found a concrete counterexample.
\end{theoremE}
\begin{proof}[Proof sketch]
The last statement (correctness of \texttt{UNSAFE}) follows directly from the observation that \textsc{CheckProp} returns \texttt{concre\-te cex} iff it found a concrete robustness violation.
For the first two statements (\texttt{SAFE}), we construct $R$ as the set of all zonotopes that were removed without replacement.
Let $\langle R \rangle = \left\{ \left(x,y\right) \in \langle \zonotope\rangle \middle| \zonotope\in R\right\}$.
We then have two invariants:
First, all points in $\langle R \rangle$ have been proven robust.
Secondly, $\langle \zonotope_{\text{in}}\rangle \subseteq \langle Q\rangle \cup \langle R\rangle$ is a loop invariant.
It follows that we have proven robustness if $Q = \emptyset$.
For a full proof see \cpageref{proof:algorithm:overall_soundness}.
\end{proof}
\begin{proofEnd}
\label{proof:algorithm:overall_soundness}
As per \Cref{lem:algorithm:diff_halo_zono}, we know that $\langle\zonotope_{\text{in}}\rangle$ exactly contains all $\boldsymbol{x},\boldsymbol{y}\in\left[l,u\right]$ with $\left(\boldsymbol{x}-\boldsymbol{y}\right)\in\left[-\varepsilon,\varepsilon\right]$.
We will first address the case where \texttt{SAFE} is returned, i.e. we can assume that \textsc{CheckProp} never returns \texttt{concrete cex}.
We identify the queue $Q$ after the $i$-th iteration as $Q_i$ (with $Q_1 = Q$).
From \Cref{lem:algorith:lp_soundness} we then know that an element $\zonotope_{\text{cur}}$ is removed from $Q$ without replacement iff confidence-based global robustness w.r.t. $\tau_1,\tau_2$ has been established for all $\left(\boldsymbol{x},\boldsymbol{y}\right) \in \langle \zonotope_{\text{cur}} \rangle$.
Let $R_i$ be the set of all $\zonotope_{\text{cur}}$ removed in this way without replacement after the $i$-th iteration (with $R_1=\emptyset$).
Then define $\langle R_i \rangle \coloneqq \left\{ \left(\boldsymbol{x},\boldsymbol{y}\right)\in\langle\zonotope\rangle \middle| \zonotope \in R_i\right\}$ (accordingly for $Q$ and $Q_i$).
After any iteration $i$ we know that we have established confidence-based robustness w.r.t. $\tau_1,\tau_2$ for all $\left(\boldsymbol{x},\boldsymbol{y}\right) \in \langle R_{i+1} \rangle$.
Further, in case of a split, we know that $\langle\zonotope_{\text{cur}}\rangle \subseteq \langle\zonotope_1\rangle\cup\langle\zonotope_2\rangle$ from \Cref{lem:algorithm:soundess_spits}.
Hence, we always have $\langle Q_i \rangle \subseteq \langle Q_{i+1} \rangle \cup \langle R_{i+1} \rangle$ (in practice, this is even an equality, however this is not needed for soundness).
The algorithm returns \texttt{SAFE} only if $Q = \emptyset$, but then $\langle Q \rangle = \emptyset$ and hence $\langle Q \rangle \subseteq \langle R_i \rangle$ for some $R_i$.
But since $\langle Q \rangle = \langle \zonotope_{\text{in}} \rangle$ contains all $\boldsymbol{x},\boldsymbol{y}$ within bounds and with pertubations bounded by $\varepsilon$, we have proven confidence-based robustness w.r.t. $\tau_1,\tau_2$ for all such $\boldsymbol{x},\boldsymbol{y}$.

Regarding the case where \textsc{PropCheck} returns \texttt{concrete cex}, the procedure always checks whether the returned counterexample is a concrete violation w.r.t. $f$.
Hence, any such returned values represent a concrete counterexample.
\end{proofEnd}

\section{Evaluation}
\label{sec:evaluation}
\looseness=-1
We implement our verification algorithm from \Cref{sec:algorithm} on top of the propagation algorithm from \verydiff~\cite{teuber2025revisiting} in Julia~\cite{bezanson2017julia} using Gurobi~\cite{gurobi} as LP backend to enable the verification of confidence-based robustness properties.
We will refer to our implementation as \ourTool{}.
We evaluate the performance of \ourTool{} on a diverse set of publicly available DNNs. We evaluate on both the asymmetric and symmetric  %
confidence-based global robustness properties as defined in \Cref{def:conf_rob_old,def:conf_rob_new}, respectively. We also compare our method to the seminal work by Athavale et al. \cite{athavale2024verifying}, which introduced the concept of confidence-based global robustness and the first verification technique thereof. %
We refer to their approach as the \emph{CGR (confidence-based global robustness) baseline} throughout the evaluation.
We also extended \emph{CGR baseline} to support symmetric confidence-based global robustness.

The experiments were conducted on an AMD EPYC 7713 64-Core Processor with 32 GB RAM. The timeout was set to 2 hours. 
To compare to the CGR baseline, we convert DNNs to nnet format that their tool requires.
Overall, we observe that the CGR baseline often generates spurious counterexamples due to the coarser approximation in their implementation.
For all cases where \ourTool{} proves safety and the CGR baseline reports a counterexample, a manual inspection found the counterexamples generated by the baseline to be spurious. We marked these cases by $^\star$.

\subsection{Datasets and Neural Networks}
We evaluate \ourTool{} on benchmarks drawn from prior literature. 
While numerous benchmarks exist for evaluating \emph{local} robustness of DNNs, there is currently no standardized benchmark set for evaluating \emph{global} robustness properties. Hence, we select a set of publicly available benchmark DNNs used in prior work on global or relational DNN verification, to
cover different regimes of DNN size, input dimensionality, and output
structure. Below, we describe the benchmarks in detail. In contrast to local robustness verification, for each DNN we fix a large input space box (e.g., the entire normalized input space $[-1,1]^{561}$ for HAR) for which we \emph{globally} prove the considered robustness property (i.e., expressed as a double  quantification over all possible neighboring  pairs in the input space, as opposed to a single quantification over the region around a specific input, as in the case of local robustness).

\paragraph{Law.}
\looseness=-1
We use the Law benchmark from the baseline tool~\cite{athavale2024verifying}.
The Law School Admissions Council (LSAC) provides a dataset called Law School Admissions comprising information on law students.
The dataset tracks the students’ progress through law school and predicts their likelihood of passing the bar exam~\cite{wightman1998lsac}.
The DNN is tiny with 4 inputs, 4-3 hidden neurons and 2 outputs.
The normalized input region is  $[-1,1]^4$.
We include this DNN despite its size for comparison because the CGR baseline was able to verify it.

\paragraph{LHC}
Since the Law benchmark is relatively small, we evaluate our approach on a larger DNN to assess scalability.
To this end, we investigate the CERN LHC benchmark~\cite{duarte2018fast} previously used for DNN verification~\cite{teuber2025revisiting}. 
The LHC benchmark consists of a DNN designed for particle classification based on high-dimensional detector features.
It is optimized for deployment on hardware accelerators (e.g., FPGAs) to meet strict timing constraints. Such resource-constrained deployment environments necessitate 
compact network architectures by design. Yet, the correctness-critical nature of particle physics experiments demands strong guarantees on these networks -- making them a natural target for 
global robustness verification. The trained DNN has 16 input features, two hidden layers with 40 $\ReLU$s each,
and 5 output classes. The normalized input domain is $[-0.5,0.5]^{16}$.

\paragraph{ACAS Xu.} To further evaluate our scalability, we pick VNN-COMP's~\cite{kaulen20256th} most complex fully-connected DNN, namely, ACAS from Julian et al.~\cite{julian2016policy}:
This is a standard benchmark widely used by many DNN verification tools.
The DNNs are trained for airborne collision avoidance, where they take aircraft state information as input, and output advisories corresponding to collision-avoidance maneuvers.
The DNN has 5 inputs, 6 hidden layers with 50 neurons each, and 5 outputs.
We use this benchmark to investigate how we perform on deeper DNNs.
All inputs are scaled to the $[-1,1]$ range using the standard ACAS Xu normalization prior to verification. Specifically, the normalized input domain of interest is given by:
$X_0 \in [0.6,0.6799],\ 
X_1,X_2 \in [-0.5,0.5],\ 
X_3 \in [0.45,0.5],\ 
X_4 \in [-0.5,-0.45]$.

\paragraph{HAR} Finally we consider the HAR benchmark~\cite{DBLP:conf/esann/AnguitaGOPR13} with its large input space.
HAR is a publicly available benchmark that has been used in prior DNN verification work \cite{paulsen_reludiff_2020,kern2025certified,paulsen_neurodiff_2020}.
The Human Activity Recognition (HAR) dataset tries to predict the activity of a given individual (walking, standing, etc.) from 561 sensory inputs. As HAR systems are increasingly deployed in healthcare and assisted living applications~\cite{de2017mobile, serpush2022wearable, liu2021overview, bibbo2023human,sankaran2025future}, where incorrect activity predictions can have serious consequences making strong robustness guarantees particularly valuable. At the same time, deployment on wearable and embedded sensing devices imposes strict memory and power constraints, motivating the use of lightweight architectures, making this benchmark a challenging and practically relevant target for global robustness verification. %
The considered DNN was trained by Kern \emph{et al.}~\cite{kern2025certified} and has 500 $\ReLU$ nodes.
The normalized input region is  $[-1,1]^{561}$.
In contrast to the previously mentioned DNNs, this DNN was trained with L1-regularization 
which has been shown to be beneficial for other DNN verification problems~\cite{kern2025certified}.

\subsection{Asymmetric Confidence-Based Global Robustness} 
We evaluate \ourTool{} for various combinations of benchmarks, epsilon values $\varepsilon$, and confidence values $\tau$ on the \textit{asymmetric confidence-based robustness} property. %
Our results can be found in \Cref{tab:evaluation:all_experiments}. %
We first evaluate \ourTool{} on the Law benchmark~\cite{athavale2024verifying}.
This is a small DNN with 4 inputs, 4-3 hidden neurons, and 2 outputs.
As shown in \Cref{tab:evaluation:all_experiments},
in this setting, both tools successfully verify the asymmetric property. However, \ourTool{} achieves lower verification times, reducing runtime by
approximately 30–40\% compared to the baseline.
Since Athavale \emph{et al.}~\cite{athavale2024verifying} only evaluated on DNNs with up to 15 $\ReLU$ nodes, this raises the question of whether \ourTool{} can scale to large DNNs.

To evaluate this question, we ran both tools on the
LHC benchmark~\cite{duarte2018fast,teuber2025revisiting}, a DNN with 16 inputs, 5 outputs, and 2 hidden layers with 40 $\ReLU$ nodes each.
As shown in \Cref{tab:evaluation:all_experiments},
\ourTool{} proves asymmetric confidence-based global robustness for the parameter pairs $(\varepsilon, \tau) = \left(0.001, 0.9\right)$ and $\left(0.001, 0.99\right)$ in less than 3 seconds. 
In contrast, the baseline tool times out after 2 hours for $\left(0.001, 0.99\right)$ and
otherwise reports spurious counterexamples due to the coarser $\mathrm{softmax}$ approximation
(since the approximation error of the \textit{CGR baseline} grows linearly with the number of output classes and reaches up to 0.40 for DNNs with 5 outputs, leading to spurious violations).

To scale the approach even further, we evaluate on the ACAS benchmark, a DNN with 6 hidden layers and 300 $\ReLU$ nodes.
We see in \Cref{tab:evaluation:all_experiments} that \ourTool{} is able to prove (asymmetric) robustness for confidence values $\tau \geq 0.7$ with epsilon 0.001 in 6742 sec and for confidence $\tau \geq 0.99$ in 182 sec.
In contrast, CGR baseline times out in both cases.

To further investigate scalability in the DNN's input dimensionality,
we run both tools on the HAR benchmark, a DNN with 561 inputs, 6 output classes, and 500 $\ReLU$ nodes.
As shown in \Cref{tab:evaluation:all_experiments}, \ourTool{} is capable of proving (asymmetric) robustness for $\varepsilon = 0.001$ and $\tau \geq 0.6$, while the CGR baseline exhausts memory in all cases. For $(\varepsilon, \tau) = \left(0.01, 0.99\right)$, \ourTool{} proves (asymmetric) robustness in 2.50 seconds. The baseline tool, however, runs out of memory.

\begin{table*}[t]
    \caption{Verification results for asymmetric and symmetric confidence-based robustness with a 2h timeout (\ourTool{} is our approach, Baseline is the \emph{CGR baseline}~\cite{athavale2024verifying}).}
    \centering
    
    \resizebox{0.8\textwidth}{!}{%
    \footnotesize    
    \begin{tabular}{c|c c||ll ll||ll ll}
        \multirow{2}{*}{\textbf{DNN}} &
        \multirow{2}{*}{$\varepsilon$} & \multirow{2}{*}{$\tau$} &
        \multicolumn{4}{c||}{\textbf{Asymmetric Robustness}} & 
        \multicolumn{4}{c}{\textbf{Symmetric Robustness}}\\\cline{4-7}\cline{8-11}
        &&&
        \multicolumn{2}{c}{\textbf{\ourTool{}}} &
        \multicolumn{2}{c||}{\textbf{Baseline}} &
        \multicolumn{2}{c}{\textbf{\ourTool{}}} &
        \multicolumn{2}{c}{\textbf{Baseline}}\\\hline\hline

         \multirow{2}{*}{\rotatebox[origin=c]{90}{{Law}}} &
        $10^{-3}$ &
        0.6 &
        \textbf{Safe} & ({1.34s}) &
        Safe & (2.21s) &
        Safe & (2.02s) & \textbf{Safe} & {(1.97s) }  \\
     
        &
        $0.1$ &
        0.6 &
        \textbf{Safe} & (1.45s) &
        Safe & (2.00s) &
        \textbf{Safe} & (1.99s) & Safe & (2.32s) \\\hline

        \multirow{5}{*}{\rotatebox[origin=c]{90}{{LHC}}} &
         \multirow{5}{*}{$10^{-3}$}& 
         0.6 &
         TO & & Cex$^\star$ & (1.54s) &
         TO & & Cex$^\star$ & (2.72s)\\
         &&
         0.7 &
         TO & & Cex$^\star$ & (1.32s) &
         TO & & Cex$^\star$ & (2.68s)\\
         &&
         0.8 &
         TO & & Cex$^\star$ & (1.16s) &
         \textbf{Safe} & (58.35s) & Cex$^\star$ & (2.67s)\\
         &&
         0.9 &
         \textbf{Safe} & (0.70s) & Cex$^\star$ & (1.30s) &
         \textbf{Safe} & (37.90s) & Cex$^\star$ &   (2.67)      \\
         &&
         0.99 &
         \textbf{Safe} & (2.35s) & TO & &
          \textbf{Safe} & (5.40s) & TO 
          \\\hline
        
        \multirow{3}{*}{\rotatebox[origin=c]{90}{{ACAS}}} &
        \multirow{3}{*}{$10^{-3}$} &
        0.6 &
        TO & &
        TO & &
        \textbf{Safe} & (6914s) & 
        Cex$^\star$ & (5.11s) \\
        &&
        0.7 &
        \textbf{Safe} & (6742s) &
        TO & &
        \textbf{Safe} & (6355s) &
        Cex$^\star$ & (5.03s) \\
        &&
        0.99 & 
        \textbf{Safe} & (182s) &
        TO & &
        \textbf{Safe} & (146s) &
        TO\\\hline

        \multirow{10}{*}{\rotatebox[origin=c]{90}{{HAR}}} &
        \multirow{5}{*}{$10^{-3}$} &
        0.6 &
        \textbf{Safe}   & (1.40s)   & MO   &    &
        \textbf{Safe}   & (11.05s)  & MO  &   \\
        &&
        0.7 &
        \textbf{Safe}   & (1.39s)   & MO   &    &
        \textbf{Safe}   & (12.44s)  & MO   &    \\
        &&
        0.8 &
        \textbf{Safe}   & (1.43s)   & MO   &    &
        \textbf{Safe}   & (10.48s)  & MO   &   \\
        &&
        0.9 &
        \textbf{Safe}   & (1.40s)   & MO  &  &
        \textbf{Safe}   & (9.847s)  & MO   &   \\
        &&
        0.99 &
        \textbf{Safe}   & (2.36s)   & MO    &    &
        \textbf{Safe}   & (6.25s)   & MO 
        \\\cline{2-11}
        &
        \multirow{5}{*}{0.01} &
        0.6 &
        MO              &           & MO    &    &
        \textbf{Cex}             & (11.32s)  & MO           & \\
        &&
        0.7 &
        MO              &           & MO    &   &
        TO              &           & MO   & \\
        &&
        0.8 &
        TO              &           & MO    &    &
        TO              &           & MO    &      \\
        &&
        0.9 &
        TO              &           & MO    &      &
        \textbf{Safe}   & (10.75s)  & MO    &      \\
        &&
        0.99 &
        \textbf{Safe}   & (2.50s)   & MO     &      &
        \textbf{Safe}   & (7.24s)   & MO
    \end{tabular}
    }
    \label{tab:evaluation:all_experiments}
\vspace{0.5ex}
 \begin{minipage}{0.85\textwidth}
    \footnotesize $^\star$Spurious counterexamples
\end{minipage}
\end{table*}

\subsection{Symmetric Confidence-Based Robustness}
We also evaluate \ourTool{} for symmetric confidence-based robustness. Our results empirically support the motivation for introducing the symmetric formulation (see \Cref{tab:evaluation:all_experiments}).

On the HAR benchmark, the asymmetric property is inconclusive for both $(\varepsilon, \tau) = (0.01, 0.9)$ and 
$(\varepsilon, \tau) = (0.01, 0.6)$. In contrast, \ourTool{} verifies the symmetric property in the former case in approximately 11 seconds, and finds a counterexample in the latter case in approximately 11 seconds. The CGR baseline exhausts memory in all cases. 

We observe similar results on the ACAS and LHC benchmarks: 
\ourTool{} verifies the symmetric property for $(\varepsilon, \tau) = (0.001, 0.6)$ on ACAS and for 
$(\varepsilon, \tau) = (0.001, 0.8)$ on LHC, while the asymmetric property remains inconclusive in both cases.

These results highlight the practical relevance of symmetric robustness, independently of whether the asymmetric property holds. Our new property enables meaningful robustness guarantees in parameter regimes where the asymmetric property cannot be established. This may be due to inconclusive verification or because the asymmetric requirement is too strong and may be violated, even though such violations cannot be confirmed by current verifiers.

\subsection{Parameter Sensitivity}

We analyze the effect of confidence threshold $\tau$ and perturbation bound $\varepsilon$ on verification time and conclusiveness.

\subsubsection{Effect of $\tau$}
Higher $\tau$ generally reduces verification time, as higher confidence constraints eliminate larger parts of the search space. This is clearly visible in the ACAS benchmark: for $\varepsilon = 0.001$, asymmetric verification at $\tau = 0.99$ takes 182 sec, compared to 6742 sec at $\tau = 0.7$, while at 
$\tau = 0.6$ it times out entirely. The same trend holds for the symmetric property, where verification takes 146 sec at $\tau = 0.99$, 6355 sec at $\tau = 0.7$, and 6914 sec at $\tau = 0.6$. This suggests a smooth trade-off: as $\tau$ increases, the region of input space that must be proven robust shrinks, making the problem more tractable.

\subsubsection{Effect of $\varepsilon$}
\looseness=-1
Larger $\varepsilon$ increases the size of secondary generators, leading to greater divergence between the two executions and a higher likelihood of requiring generator splits. On the HAR benchmark, \ourTool{} proves safety for all $\tau \geq 0.6$ at $\varepsilon = 0.001$. For $\varepsilon = 0.01$, the approach times out or exhausts memory for $\tau \leq 0.8$. Notably, at $\tau = 0.9$ with $\varepsilon = 0.01$, the symmetric property is verifiable while the asymmetric variant times out, demonstrating the practical value of the symmetric formulation for larger perturbations.

\subsection{Discussion}
\looseness=-1
Our results demonstrate that our approach implemented in \ourTool{} is both faster and more precise than the CGR baseline, while scaling more effectively to larger DNNs. Across the ACAS, HAR, and LHC benchmarks, our tool establishes confidence-based robustness in settings where the baseline either times out or reports spurious counterexamples due to the coarse softmax approximations.
In particular, for larger DNNs with many output classes, the baseline approach suffers from class-dependent approximation errors and scalability limitations, whereas \ourTool{} maintains efficiency and precision.

In contrast to other properties like local robustness, global robustness verification must consider the DNN's entire input space (rather than e.g.\ only considering minor perturbations around a single fixed datapoint).
Despite this, \ourTool{} is able to scale to a NN with more than 500 input dimensions (HAR), underscoring the scalability advancements achieved by differential verification.

Our experiments also show that our symmetric confidence-based robustness property provides tangible practical benefits on both HAR and LHC benchmarks.
The symmetric formulation enables successful verification in parameter configurations where the asymmetric property remains inconclusive.
By requiring confidence constraints on \emph{both} inputs, the symmetric property allows us to recover meaningful robustness guarantees without sacrificing scalability.
For example, symmetric confidence-based robustness enabled us to establish a global robustness guarantee w.r.t. perturbations of $1/200^{\text{th}}$ of the sensor ranges of HAR's 561-dimensional input space.

We note that refinement is used adaptively. In many cases (e.g., all verified HAR properties), we can verify robustness properties without splitting, underlining the precision of our abstract domain. While on ACAS, verifying the symmetric property at $\varepsilon = 0.001$ requires between 2838 (at $\tau = 0.99$) and 168798 splits (at $\tau = 0.6$).
Overall, our results highlight the combined advantages of our property and verifier in terms of efficiency, precision, and expressiveness.

\paragraph{Limitations.}
\looseness=-1
Confidence-based robustness has an inherent risk of vacuous guarantees when robustness is established for confidence levels unattainable within the input domain.
\ourTool{} has no explicit witness search procedure, but we check propagated zonotope centers as a lightweight check for confidence attainability.
For all benchmarks except ACAS, this procedure found concrete witnesses.
Despite this epistemic uncertainty for ACAS, we include the ACAS results as it is a widely used and sufficiently complex VNN-COMP benchmark.

\section{Related work}
\label{related}

\subsubsection*{Global Robustness Verification.}
The work closest in spirit to ours is by Athavale \emph{et al.}~\cite{athavale2024verifying}.
They introduce the notion of asymmetric confidence-based global robustness cited in \Cref{def:conf_rob_old}.
We compare against their approach as \emph{CGR baseline} which uses self-composition and employs the verifier Marabou~\cite{katz2019marabou} for verification.
They rely on a piecewise linear approximation of softmax-based confidence constraint and verify robustness by exhaustively checking class-wise dominance conditions.
While their method provides strong correctness guarantees, its reliance on complete solving and class-dependent reasoning limits scalability, particularly as the number of output classes grows.

\subsubsection*{Local Robustness Verification.} 
A large body of research in the DNN safety domain tackles local robustness. A network is locally robust w.r.t. to a given input $\boldsymbol{x}$ if all the points in the vicinity of $\boldsymbol{x}$ map to the same output as $\boldsymbol{x}$.

Most techniques for the verification of local robustness properties either rely on static analysis~\cite{pulina2010abstraction, singh2019abstract,Singh18,gehr2018ai2,10097028, wang2021beta}
via over-approximations or SMT- or mixed integer programming-based approaches~\cite{seshia2018formal,huang2017safety,gopinath2018deepsafe, cheng2017maximum, tjeng2017evaluating,dutta2018output, katz2017reluplex,katz2019marabou}  which offer greater precision at the cost of reduced scalability.
Bosman \emph{et al.}~\cite{DBLP:journals/jair/BosmanBHR25} accumulate local robustness results for many such $\boldsymbol{x}$ to characterize a DNN's overall robustness.
Orthogonally, Geng \emph{et al.}~\cite{DBLP:conf/icml/GengLXWGS23} identify input space regions with identical neural activation patterns and verify classification consistency.
Unlike our approach, all mentioned works stop short of providing guarantees over the \emph{entire} input space.

\subsubsection*{Other Relational Properties.}

Recent works by Banerjee et al.~\cite{banerjee2024input} and Suresh et al.~\cite{suresh2024relational} study input-relational verification in the context of universal adversarial perturbations~\cite{DBLP:conf/cvpr/Moosavi-Dezfooli17} (UAPs).
Their formulation considers a fixed set of inputs and asks whether a \emph{single} perturbation, applied uniformly, can cause all of them to be misclassified.
The relational aspect thus arises from sharing a perturbation across discrete inputs, rather than from semantic relationships between the inputs themselves.
In contrast, we reason about \emph{global} 2-safety properties over a DNN’s \emph{entire} input space, achieving this global perspective by incorporating prediction confidence.

At a high level, our work is related to fairness verifiers. Khedr et al.~\cite{khedr2022certifair} and Biswas et al.~\cite{BiswasR23} verify individual and group fairness in DNNs. Both frameworks reason about fairness constraints defined with respect to sensitive attributes or population-level parity notions, and verify whether these constraints hold within structured subsets of the input space. FairQuant \cite{10.1109/ICSE55347.2025.00016} proposes a quantitative certification procedure for 
fairness of DNNs, computing a lower bound on the certified percentage 
of fair inputs. To this end, FairQuant 
partitions the input space and reports the fraction of partitions 
verified to be fair. In contrast to our work, FairQuant does not 
provide a \emph{global} guarantee over the entire input space, but 
rather a quantitative approximation of the fraction of inputs 
satisfying the fairness property.

QEBVerif proposes a verification method for quantization error bounds in fully quantized neural networks. Like our approach, QEBVerif performs 
differential analysis by propagating the difference between two 
network executions layer by layer, rather than independently computing 
output intervals and subtracting. However, QEBVerif addresses a 
structurally different problem: it compares a DNN with its quantized 
counterpart on related inputs, where the QNN input fully determines 
the DNN input via a fixed normalization. This means that they have one universally quantified variable, unlike our setting, where both inputs are universally quantified over the entire input space.

\subsubsection*{Zonotopes.}

Zonotopes are a well-established abstract domain for DNN verification used for safety and local robustness properties, for example in tools such as DeepZ~\cite{Singh18}, NNV~\cite{DBLP:conf/cav/TranYLMNXBJ20,DBLP:conf/cav/LopezCTJ23}, or nnenum~\cite{bak2020improved,DBLP:conf/nfm/Bak21}.
Some works leverage polynomial~\cite{DBLP:conf/nfm/Kochdumper0AB23,DBLP:conf/aaai/LadnerA24} or hybrid~\cite{DBLP:conf/amcc/ZhangX23} zonotopes to increase the precision of the abstract domain.
While these works focus on single-DNN properties, nnenum has also been extended to equivalence verification~\cite{DBLP:conf/ictai/TeuberBKS21} without differential verification.

\looseness=-1
Teuber \emph{et al.}~\cite{teuber2025revisiting} use zonotopes for differential verification~\cite{paulsen_reludiff_2020,paulsen_neurodiff_2020} of equivalence by bounding reachable values of individual DNN executions and their differences.
In contrast, we do not address equivalence (same input, different DNNs), but two-safety properties (different inputs, same DNN), which require significant changes to the verification algorithm (see \Cref{sec:algorithm}).
We demonstrate that zonotope-based differential verification significantly improves performance for confidence-based global robustness properties.

\section{Conclusion}
\label{conclusion}

We present a scalable framework for verifying confidence-based global robustness properties in DNNs. First, we introduce a novel property, \emph{symmetric confidence-based global robustness}, which relaxes the restrictive constraints of asymmetric confidence-based global robustness by permitting adversarial examples when they result in low-confidence predictions. Secondly, we take a first step towards bridging the gap between precision and scalability with \emph{differential halo zonotopes}, a novel technique that propagates perturbed input pairs while precisely bounding differences between DNN executions.

Our experimental evaluation confirms that our approach outperforms the baseline and that \ourTool{} scales to significantly wider and deeper networks than prior methods, verifying significantly wider (500 ReLUs) and deeper DNNs (6 times 50 ReLUs) while supporting larger input spaces (up to 561 dimensions).

While this work focuses on proving safety, extending the framework with counterexample search strategies remains a natural and promising direction to accelerate the search for adversarial attacks.
Another interesting research direction is to devise compositionality techniques to further scale up the  static analysis to larger multidimensional input spaces and more complex DNNs.

\bibliographystyle{ACM-Reference-Format}
\bibliography{referencesclean}

@inproceedings{ao2023two,
  author       = {Shuang Ao and
                  Stefan Rueger and
                  Advaith Siddharthan},
  editor       = {Robin J. Evans and
                  Ilya Shpitser},
  title        = {Two Sides of Miscalibration: Identifying Over and Under-Confidence
                  Prediction for Network Calibration},
  booktitle    = {Uncertainty in Artificial Intelligence},
  eventtitleaddon = {{UAI} 2023},
  series       = {Proceedings of Machine Learning Research},
  volume       = {216},
  pages        = {77--87},
  publisher    = {{PMLR}},
  year         = {2023},
  url          = {https://proceedings.mlr.press/v216/ao23a.html},
  bibsource    = {dblp computer science bibliography, https://dblp.org}
}

@inproceedings{athavale2024verifying,
  author       = {Anagha Athavale and
                  Ezio Bartocci and
                  Maria Christakis and
                  Matteo Maffei and
                  Dejan Nickovic and
                  Georg Weissenbacher},
  editor       = {Arie Gurfinkel and
                  Vijay Ganesh},
  title        = {Verifying Global Two-Safety Properties in Neural Networks with Confidence},
  booktitle    = {36th International Conference on Computer Aided Verification},
  eventtitleaddon = {{CAV} 2024},
  series       = {{LNCS}},
  volume       = {14682},
  pages        = {329--351},
  publisher    = {Springer},
  year         = {2024},
  opturl          = {https://doi.org/10.1007/978-3-031-65630-9\_17},
  doi          = {10.1007/978-3-031-65630-9\_17},
  bibsource    = {dblp computer science bibliography, https://dblp.org}
}

@inproceedings{bak2020improved,
  author       = {Stanley Bak and
                  Hoang{-}Dung Tran and
                  Kerianne Hobbs and
                  Taylor T. Johnson},
  editor       = {Shuvendu K. Lahiri and
                  Chao Wang},
  title        = {Improved Geometric Path Enumeration for Verifying ReLU Neural Networks},
  booktitle    = {32nd International Conference on Computer Aided Verification},
  eventtitleaddon = {{CAV} 2020},
  series       = {{LNCS}},
  volume       = {12224},
  pages        = {66--96},
  publisher    = {Springer},
  year         = {2020},
  opturl          = {https://doi.org/10.1007/978-3-030-53288-8\_4},
  doi          = {10.1007/978-3-030-53288-8\_4},
  bibsource    = {dblp computer science bibliography, https://dblp.org}
}

@inproceedings{DBLP:conf/nfm/Bak21,
  author       = {Stanley Bak},
  editor       = {Aaron Dutle and
                  Mariano M. Moscato and
                  Laura Titolo and
                  C{\'{e}}sar A. Mu{\~{n}}oz and
                  Ivan Perez},
  title        = {nnenum: Verification of ReLU Neural Networks with Optimized Abstraction
                  Refinement},
  booktitle    = {13th {NASA} Formal Methods Symposium},
  eventtitleaddon = {{NFM} 2021},
  series       = {{LNCS}},
  volume       = {12673},
  pages        = {19--36},
  publisher    = {Springer},
  year         = {2021},
  opturl          = {https://doi.org/10.1007/978-3-030-76384-8\_2},
  doi          = {10.1007/978-3-030-76384-8\_2},
  bibsource    = {dblp computer science bibliography, https://dblp.org}
}

@inproceedings{10.5555/3692070.3692181,
  author       = {Debangshu Banerjee and
                  Gagandeep Singh},
  title        = {Relational {DNN} Verification With Cross Executional Bound Refinement},
  booktitle    = {41st International Conference on Machine Learning},
  eventtitleaddon = {{ICML} 2024},
  publisher    = {OpenReview.net},
  year         = {2024},
  url          = {https://openreview.net/forum?id=HOG80Yk4Gw},
  bibsource    = {dblp computer science bibliography, https://dblp.org}
}

@article{banerjee2024input,
  author       = {Debangshu Banerjee and
                  Changming Xu and
                  Gagandeep Singh},
  title        = {Input-Relational Verification of Deep Neural Networks},
  journal      = {Proc. {ACM} Program. Lang.},
  volume       = {8},
  number       = {{PLDI}},
  pages        = {1--27},
  year         = {2024},
  opturl          = {https://doi.org/10.1145/3656377},
  doi          = {10.1145/3656377},
  bibsource    = {dblp computer science bibliography, https://dblp.org}
}

@inproceedings{10097028,
  author       = {Anahita Baninajjar and
                  Kamran Hosseini and
                  Ahmed Rezine and
                  Amir Aminifar},
  title        = {SafeDeep: {A} Scalable Robustness Verification Framework for Deep
                  Neural Networks},
  booktitle    = {{IEEE} International Conference on Acoustics, Speech and Signal Processing},
  eventtitleaddon = {{IEEE} {ICASSP} 2023},
  pages        = {1--5},
  publisher    = {{IEEE}},
  year         = {2023},
  opturl          = {https://doi.org/10.1109/ICASSP49357.2023.10097028},
  doi          = {10.1109/ICASSP49357.2023.10097028},
  bibsource    = {dblp computer science bibliography, https://dblp.org}
}

@article{bezanson2017julia,
  author       = {Jeff Bezanson and
                  Alan Edelman and
                  Stefan Karpinski and
                  Viral B. Shah},
  title        = {Julia: {A} Fresh Approach to Numerical Computing},
  journal      = {{SIAM} Rev.},
  volume       = {59},
  number       = {1},
  pages        = {65--98},
  year         = {2017},
  opturl          = {https://doi.org/10.1137/141000671},
  doi          = {10.1137/141000671},
  bibsource    = {dblp computer science bibliography, https://dblp.org}
}

@inproceedings{BiswasR23,
  author       = {Sumon Biswas and
                  Hridesh Rajan},
  title        = {Fairify: Fairness Verification of Neural Networks},
  booktitle    = {45th {IEEE/ACM} International Conference on Software Engineering,
                  {ICSE}},
  eventtitleaddon = {{IEEE/ACM} {ICSE} 2023},
  pages        = {1546--1558},
  publisher    = {{IEEE}},
  year         = {2023},
  url          = {https://doi.org/10.1109/ICSE48619.2023.00134},
  doi          = {10.1109/ICSE48619.2023.00134},
  bibsource    = {dblp computer science bibliography, https://dblp.org}
}

@article{DBLP:journals/jair/BosmanBHR25,
  author       = {Annelot W. Bosman and
                  Aaron Berger and
                  Holger H. Hoos and
                  Jan N. van Rijn},
  title        = {Robustness Distributions in Neural Network Verification},
  journal      = {J. Artif. Intell. Res.},
  volume       = {83},
  year         = {2025},
  url          = {https://doi.org/10.1613/jair.1.18403},
  doi          = {10.1613/JAIR.1.18403},
  bibsource    = {dblp computer science bibliography, https://dblp.org}
}

@inproceedings{cheng2017maximum,
  author       = {Chih{-}Hong Cheng and
                  Georg N{\"{u}}hrenberg and
                  Harald Ruess},
  editor       = {Deepak D'Souza and
                  K. Narayan Kumar},
  title        = {Maximum Resilience of Artificial Neural Networks},
  booktitle    = {15th International Symposium on Automated Technology for Verification and Analysis},
  eventtitleaddon = {{ATVA} 2017},
  series       = {{LNCS}},
  volume       = {10482},
  pages        = {251--268},
  publisher    = {Springer},
  year         = {2017},
  opturl          = {https://doi.org/10.1007/978-3-319-68167-2\_18},
  doi          = {10.1007/978-3-319-68167-2\_18},
  bibsource    = {dblp computer science bibliography, https://dblp.org}
}

@article{duarte2018fast,
  title={Fast inference of deep neural networks in FPGAs for particle physics},
  author={Duarte, Javier and Han, Song and Harris, Philip and Jindariani, Sergo and Kreinar, Edward and Kreis, Benjamin and Ngadiuba, Jennifer and Pierini, Maurizio and Rivera, Ryan and Tran, Nhan and others},
  journal={Journal of instrumentation},
  doi={10.1088/1748-0221/13/07/P07027},
  eprint = {1804.06913},
  volume={13},
  number={07},
  pages={P07027},
  year={2018},
  publisher={IOP Publishing}
}

@inproceedings{dutta2018output,
  author       = {Souradeep Dutta and
                  Susmit Jha and
                  Sriram Sankaranarayanan and
                  Ashish Tiwari},
  editor       = {Aaron Dutle and
                  C{\'{e}}sar A. Mu{\~{n}}oz and
                  Anthony Narkawicz},
  title        = {Output Range Analysis for Deep Feedforward Neural Networks},
  booktitle    = {10th {NASA} Formal Methods Symposium},
  eventtitleaddon = {{NFM} 2028},
  series       = {{LNCS}},
  volume       = {10811},
  pages        = {121--138},
  publisher    = {Springer},
  year         = {2018},
  opturl          = {https://doi.org/10.1007/978-3-319-77935-5\_9},
  doi          = {10.1007/978-3-319-77935-5\_9},
  bibsource    = {dblp computer science bibliography, https://dblp.org}
}

@inproceedings{gehr2018ai2,
  author       = {Timon Gehr and
                  Matthew Mirman and
                  Dana Drachsler{-}Cohen and
                  Petar Tsankov and
                  Swarat Chaudhuri and
                  Martin T. Vechev},
  title        = {{AI2:} Safety and Robustness Certification of Neural Networks with
                  Abstract Interpretation},
  booktitle    = {{IEEE} Symposium on Security and Privacy},
  eventtitleaddon = {{IEEE} {SP} 2018},
  pages        = {3--18},
  publisher    = {{IEEE} Computer Society},
  year         = {2018},
  opturl          = {https://doi.org/10.1109/SP.2018.00058},
  doi          = {10.1109/SP.2018.00058},
  bibsource    = {dblp computer science bibliography, https://dblp.org}
}

@inproceedings{DBLP:conf/cav/GhorbalGP09,
  author       = {Khalil Ghorbal and
                  Eric Goubault and
                  Sylvie Putot},
  editor       = {Ahmed Bouajjani and
                  Oded Maler},
  title        = {The Zonotope Abstract Domain Taylor1+},
  booktitle    = {21st International Conference on Computer Aided Verification},
  eventtitleaddon = {{CAV} 2009},
  series       = {{LNCS}},
  volume       = {5643},
  pages        = {627--633},
  publisher    = {Springer},
  year         = {2009},
  opturl          = {https://doi.org/10.1007/978-3-642-02658-4\_47},
  doi          = {10.1007/978-3-642-02658-4\_47},
  bibsource    = {dblp computer science bibliography, https://dblp.org}
}

@book{goodfellow2016deep,
  author    = {Ian J. Goodfellow and
               Yoshua Bengio and
               Aaron C. Courville},
  title     = {Deep Learning},
  series    = {Adaptive computation and machine learning},
  publisher = {{MIT} Press},
  year      = {2016},
  url       = {http://www.deeplearningbook.org/},
  isbn      = {978-0-262-03561-3},
}

@inproceedings{gopinath2018deepsafe,
  author       = {Divya Gopinath and
                  Guy Katz and
                  Corina S. Pasareanu and
                  Clark W. Barrett},
  editor       = {Shuvendu K. Lahiri and
                  Chao Wang},
  title        = {DeepSafe: {A} Data-Driven Approach for Assessing Robustness of Neural
                  Networks},
  booktitle    = {16th International Symposium on Automated Technology for Verification and Analysis},
  eventtitleaddon = {{ATVA} 2018},
  series       = {{LNCS}},
  volume       = {11138},
  pages        = {3--19},
  publisher    = {Springer},
  year         = {2018},
  opturl          = {https://doi.org/10.1007/978-3-030-01090-4\_1},
  doi          = {10.1007/978-3-030-01090-4\_1},
  bibsource    = {dblp computer science bibliography, https://dblp.org}
}

@inproceedings{guo2017calibration,
  author       = {Chuan Guo and
                  Geoff Pleiss and
                  Yu Sun and
                  Kilian Q. Weinberger},
  editor       = {Doina Precup and
                  Yee Whye Teh},
  title        = {On Calibration of Modern Neural Networks},
  booktitle    = {34th International Conference on Machine Learning},
  eventtitleaddon = {{ICML} 2017},
  series       = {Proceedings of Machine Learning Research},
  volume       = {70},
  pages        = {1321--1330},
  publisher    = {{PMLR}},
  year         = {2017},
  url          = {http://proceedings.mlr.press/v70/guo17a.html},
  bibsource    = {dblp computer science bibliography, https://dblp.org}
}

@misc{gurobi,
  author = {{Gurobi Optimization, LLC}},
  title = {{Gurobi Optimizer Reference Manual}},
  year = 2023,
  url = "https://www.gurobi.com"
}

@inproceedings{huang2017safety,
  author       = {Xiaowei Huang and
                  Marta Kwiatkowska and
                  Sen Wang and
                  Min Wu},
  editor       = {Rupak Majumdar and
                  Viktor Kuncak},
  title        = {Safety Verification of Deep Neural Networks},
  booktitle    = {29th International Conference on Computer Aided Verification},
  eventtitleaddon = {{CAV} 2017},
  series       = {{LNCS}},
  volume       = {10426},
  pages        = {3--29},
  publisher    = {Springer},
  year         = {2017},
  opturl          = {https://doi.org/10.1007/978-3-319-63387-9\_1},
  doi          = {10.1007/978-3-319-63387-9\_1},
  bibsource    = {dblp computer science bibliography, https://dblp.org}
}

@inproceedings{julian2016policy,
  author={Julian, Kyle D. and Lopez, Jessica and Brush, Jeffrey S. and Owen, Michael P. and Kochenderfer, Mykel J.},
  booktitle={35th IEEE/AIAA Digital Avionics Systems Conference},
  eventtitleaddon = {{DASC} 2016},
  title={Policy compression for aircraft collision avoidance systems},  
  year={2016},
  pages={1-10},
  doi={10.1109/DASC.2016.7778091}
}

@inproceedings{katz2017reluplex,
  author       = {Guy Katz and
                  Clark W. Barrett and
                  David L. Dill and
                  Kyle Julian and
                  Mykel J. Kochenderfer},
  editor       = {Rupak Majumdar and
                  Viktor Kuncak},
  title        = {{Reluplex}: An Efficient {SMT} Solver for Verifying Deep Neural Networks},
  booktitle    = {29th International Conference on Computer Aided Verification},
  eventtitleaddon = {{CAV} 2017},
  series       = {{LNCS}},
  volume       = {10426},
  pages        = {97--117},
  publisher    = {Springer},
  year         = {2017},
  opturl          = {https://doi.org/10.1007/978-3-319-63387-9\_5},
  doi          = {10.1007/978-3-319-63387-9\_5},
  bibsource    = {dblp computer science bibliography, https://dblp.org}
}

@inproceedings{katz2019marabou,
  author       = {Guy Katz and
                  Derek A. Huang and
                  Duligur Ibeling and
                  Kyle Julian and
                  Christopher Lazarus and
                  Rachel Lim and
                  Parth Shah and
                  Shantanu Thakoor and
                  Haoze Wu and
                  Aleksandar Zeljic and
                  David L. Dill and
                  Mykel J. Kochenderfer and
                  Clark W. Barrett},
  editor       = {Isil Dillig and
                  Serdar Tasiran},
  title        = {The Marabou Framework for Verification and Analysis of Deep Neural
                  Networks},
  booktitle    = {31st International Conference on Computer Aided Verification},
  eventtitleaddon = {{CAV} 2019},
  series       = {{LNCS}},
  volume       = {11561},
  pages        = {443--452},
  publisher    = {Springer},
  year         = {2019},
  opturl          = {https://doi.org/10.1007/978-3-030-25540-4\_26},
  doi          = {10.1007/978-3-030-25540-4\_26},
  bibsource    = {dblp computer science bibliography, https://dblp.org}
}

@inproceedings{kern2025certified,
  author       = {Philipp Kern and
                  Edoardo Manino and
                  Carsten Sinz},
  editor       = {Mirco Giacobbe and
                  Anna Lukina},
  title        = {Certified Error Analysis of Homomorphically Encrypted Neural Networks},
  booktitle    = {2nd International Symposium on {AI} Verification},
  eventtitleaddon = {{SAIV} 2025},
  series       = {{LNCS}},
  volume       = {15947},
  pages        = {156--179},
  publisher    = {Springer},
  year         = {2025},
  opturl          = {https://doi.org/10.1007/978-3-031-99991-8\_8},
  doi          = {10.1007/978-3-031-99991-8\_8},
  bibsource    = {dblp computer science bibliography, https://dblp.org}
}

@inproceedings{khedr2022certifair,
  author       = {Haitham Khedr and
                  Yasser Shoukry},
  editor       = {Brian Williams and
                  Yiling Chen and
                  Jennifer Neville},
  title        = {{CertiFair}: {A} Framework for Certified Global Fairness of Neural Networks},
  booktitle    = {37th {AAAI} Conference on Artificial Intelligence},
  eventtitleaddon = {{AAAI} 2023},
  pages        = {8237--8245},
  publisher    = {{AAAI} Press},
  year         = {2023},
  opturl          = {https://doi.org/10.1609/aaai.v37i7.25994},
  doi          = {10.1609/AAAI.V37I7.25994},
  bibsource    = {dblp computer science bibliography, https://dblp.org}
}

@inproceedings{kull2019beyond,
  author       = {Meelis Kull and
                  Miquel Perell{\'{o}}{-}Nieto and
                  Markus K{\"{a}}ngsepp and
                  Telmo de Menezes e Silva Filho and
                  Hao Song and
                  Peter A. Flach},
  editor       = {Hanna M. Wallach and
                  Hugo Larochelle and
                  Alina Beygelzimer and
                  Florence d'Alch{\'{e}}{-}Buc and
                  Emily B. Fox and
                  Roman Garnett},
  title        = {Beyond temperature scaling: Obtaining well-calibrated multi-class
                  probabilities with Dirichlet calibration},
  booktitle    = {Annual Conference on Neural Information Processing Systems},
  eventtitleaddon = {{NeurIPS} 2019},
  pages        = {12295--12305},
  year         = {2019},
  opturl          = {https://proceedings.neurips.cc/paper/2019/hash/8ca01ea920679a0fe3728441494041b9-Abstract.html},
  bibsource    = {dblp computer science bibliography, https://dblp.org}
}

@inproceedings{li2019analyzing,
  author       = {Jianlin Li and
                  Jiangchao Liu and
                  Pengfei Yang and
                  Liqian Chen and
                  Xiaowei Huang and
                  Lijun Zhang},
  editor       = {Bor{-}Yuh Evan Chang},
  title        = {Analyzing Deep Neural Networks with Symbolic Propagation: Towards
                  Higher Precision and Faster Verification},
  booktitle    = {26th International Symposium on Static Analysis},
  eventtitleaddon = {{SAS} 2019},
  series       = {{LNCS}},
  volume       = {11822},
  pages        = {296--319},
  publisher    = {Springer},
  year         = {2019},
  opturl          = {https://doi.org/10.1007/978-3-030-32304-2\_15},
  doi          = {10.1007/978-3-030-32304-2\_15},
  bibsource    = {dblp computer science bibliography, https://dblp.org}
}

@inproceedings{DBLP:conf/cav/LopezCTJ23,
  author       = {Diego Manzanas Lopez and
                  Sung Woo Choi and
                  Hoang{-}Dung Tran and
                  Taylor T. Johnson},
  editor       = {Constantin Enea and
                  Akash Lal},
  title        = {{NNV} 2.0: The Neural Network Verification Tool},
  booktitle    = {35th International Conference on Computer Aided Verification},
  eventtitleaddon = {{CAV} 2023},
  series       = {{LNCS}},
  volume       = {13965},
  pages        = {397--412},
  publisher    = {Springer},
  year         = {2023},
  opturl          = {https://doi.org/10.1007/978-3-031-37703-7\_19},
  doi          = {10.1007/978-3-031-37703-7\_19},
  bibsource    = {dblp computer science bibliography, https://dblp.org}
}

@inproceedings{DBLP:conf/cvpr/Moosavi-Dezfooli17,
  author       = {Seyed{-}Mohsen Moosavi{-}Dezfooli and
                  Alhussein Fawzi and
                  Omar Fawzi and
                  Pascal Frossard},
  title        = {Universal Adversarial Perturbations},
  booktitle    = {{IEEE} Conference on Computer Vision and Pattern Recognition},
  eventtitleaddon = {{IEEE} {CVPR} 2017},
  pages        = {86--94},
  publisher    = {{IEEE} Computer Society},
  year         = {2017},
  opturl          = {https://doi.org/10.1109/CVPR.2017.17},
  doi          = {10.1109/CVPR.2017.17},
  bibsource    = {dblp computer science bibliography, https://dblp.org}
}

@inproceedings{paulsen_neurodiff_2020,
  author    = {Brandon Paulsen and
               Jingbo Wang and
               Jiawei Wang and
               Chao Wang},
  title     = {{NeuroDiff:} Scalable Differential Verification of Neural Networks
               using Fine-Grained Approximation},
  booktitle = {35th {IEEE/ACM} International Conference on Automated Software Engineering},
  eventtitleaddon = {{IEEE/ACM} {ASE} 2020},
  pages     = {784--796},
  publisher = {{IEEE}},
  year      = {2020},
  opturl       = {https://doi.org/10.1145/3324884.3416560},
  doi       = {10.1145/3324884.3416560},
  bibsource = {dblp computer science bibliography, https://dblp.org}
}

@inproceedings{paulsen_reludiff_2020,
  author       = {Brandon Paulsen and
                  Jingbo Wang and
                  Chao Wang},
  editor       = {Gregg Rothermel and
                  Doo{-}Hwan Bae},
  title        = {ReluDiff: differential verification of deep neural networks},
  booktitle    = {42nd International Conference on Software Engineering},
  eventtitleaddon = {{IEEE/ACM} {ICSE} 2020},
  pages        = {714--726},
  publisher    = {{ACM}},
  year         = {2020},
  opturl          = {https://doi.org/10.1145/3377811.3380337},
  doi          = {10.1145/3377811.3380337},
  bibsource    = {dblp computer science bibliography, https://dblp.org}
}

@book{DBLP:books/sp/Paulson94,
  author       = {Lawrence C. Paulson},
  title        = {Isabelle - {A} Generic Theorem Prover (with a contribution by T. Nipkow)},
  series       = {{LNCS}},
  volume       = {828},
  publisher    = {Springer},
  year         = {1994},
  opturl          = {https://doi.org/10.1007/BFb0030541},
  doi          = {10.1007/BFB0030541},
  isbn         = {3-540-58244-4},
  bibsource    = {dblp computer science bibliography, https://dblp.org}
}

@inproceedings{pulina2010abstraction,
  author       = {Luca Pulina and
                  Armando Tacchella},
  editor       = {Tayssir Touili and
                  Byron Cook and
                  Paul B. Jackson},
  title        = {An Abstraction-Refinement Approach to Verification of Artificial Neural
                  Networks},
  booktitle    = {22nd International Conference on Computer Aided Verification},
  eventtitleaddon = {{CAV} 2010},
  series       = {{LNCS}},
  volume       = {6174},
  pages        = {243--257},
  publisher    = {Springer},
  year         = {2010},
  opturl          = {https://doi.org/10.1007/978-3-642-14295-6\_24},
  doi          = {10.1007/978-3-642-14295-6\_24},
  bibsource    = {dblp computer science bibliography, https://dblp.org}
}

@article{pulina2012challenging,
  author       = {Luca Pulina and
                  Armando Tacchella},
  title        = {Challenging {SMT} solvers to verify neural networks},
  journal      = {{AI} Commun.},
  volume       = {25},
  number       = {2},
  pages        = {117--135},
  year         = {2012},
  opturl          = {https://doi.org/10.3233/AIC-2012-0525},
  doi          = {10.3233/AIC-2012-0525},
  bibsource    = {dblp computer science bibliography, https://dblp.org}
}

@inproceedings{seshia2018formal,
  author       = {Sanjit A. Seshia and
                  Ankush Desai and
                  Tommaso Dreossi and
                  Daniel J. Fremont and
                  Shromona Ghosh and
                  Edward Kim and
                  Sumukh Shivakumar and
                  Marcell Vazquez{-}Chanlatte and
                  Xiangyu Yue},
  editor       = {Shuvendu K. Lahiri and
                  Chao Wang},
  title        = {Formal Specification for Deep Neural Networks},
  booktitle    = {16th International Symposium on Automated Technology for Verification and Analysis},
  eventtitleaddon = {{ATVA} 2018},
  series       = {{LNCS}},
  volume       = {11138},
  pages        = {20--34},
  publisher    = {Springer},
  year         = {2018},
  opturl          = {https://doi.org/10.1007/978-3-030-01090-4\_2},
  doi          = {10.1007/978-3-030-01090-4\_2},
  bibsource    = {dblp computer science bibliography, https://dblp.org}
}

@article{singh2019abstract,
  author       = {Gagandeep Singh and
                  Timon Gehr and
                  Markus P{\"{u}}schel and
                  Martin T. Vechev},
  title        = {An abstract domain for certifying neural networks},
  journal      = {Proc. {ACM} Program. Lang.},
  volume       = {3},
  number       = {{POPL}},
  pages        = {41:1--41:30},
  year         = {2019},
  opturl          = {https://doi.org/10.1145/3290354},
  doi          = {10.1145/3290354},
  bibsource    = {dblp computer science bibliography, https://dblp.org}
}

@inproceedings{Singh18,
  author       = {Gagandeep Singh and
                  Timon Gehr and
                  Matthew Mirman and
                  Markus P{\"{u}}schel and
                  Martin T. Vechev},
  editor       = {Samy Bengio and
                  Hanna M. Wallach and
                  Hugo Larochelle and
                  Kristen Grauman and
                  Nicol{\`{o}} Cesa{-}Bianchi and
                  Roman Garnett},
  title        = {Fast and Effective Robustness Certification},
  booktitle    = {Annual Conference on Neural Information Processing Systems},
  eventtitleaddon = {{NeurIPS} 2018},
  pages        = {10825--10836},
  year         = {2018},
  opturl          = {https://proceedings.neurips.cc/paper/2018/hash/f2f446980d8e971ef3da97af089481c3-Abstract.html},
  bibsource    = {dblp computer science bibliography, https://dblp.org}
}

@inproceedings{suresh2024relational,
  author       = {Tarun Suresh and
                  Debangshu Banerjee and
                  Gagandeep Singh},
  editor       = {Amir Globersons and
                  Lester Mackey and
                  Danielle Belgrave and
                  Angela Fan and
                  Ulrich Paquet and
                  Jakub M. Tomczak and
                  Cheng Zhang},
  title        = {Relational Verification Leaps Forward with RABBit},
  booktitle    = {Annual Conference on Neural Information Processing Systems},
  eventtitleaddon = {{NeurIPS} 2024},
  year         = {2024},
  opturl          = {http://papers.nips.cc/paper\_files/paper/2024/hash/df15617fd6d6f78fcd485401d0598761-Abstract-Conference.html},
  bibsource    = {dblp computer science bibliography, https://dblp.org}
}

@inproceedings{DBLP:conf/ictai/TeuberBKS21,
  author       = {Samuel Teuber and
                  Marko Kleine B{\"{u}}ning and
                  Philipp Kern and
                  Carsten Sinz},
  title        = {Geometric Path Enumeration for Equivalence Verification of Neural
                  Networks},
  booktitle    = {33rd {IEEE} International Conference on Tools with Artificial Intelligence},
  eventtitleaddon = {{IEEE} {ICTAI} 2021},
  pages        = {200--208},
  publisher    = {{IEEE}},
  year         = {2021},
  opturl          = {https://doi.org/10.1109/ICTAI52525.2021.00035},
  doi          = {10.1109/ICTAI52525.2021.00035},
  bibsource    = {dblp computer science bibliography, https://dblp.org}
}

@inproceedings{teuber2025revisiting,
  author       = {Samuel Teuber and
                  Philipp Kern and
                  Marvin Janzen and
                  Bernhard Beckert},
  editor       = {Arie Gurfinkel and
                  Marijn Heule},
  title        = {Revisiting Differential Verification: Equivalence Verification with
                  Confidence},
  booktitle    = {31st International Conference on Tools and Algorithms for the Construction and Analysis of Systems},
  eventtitleaddon = {{TACAS} 2025},
  series       = {{LNCS}},
  volume       = {15697},
  pages        = {257--278},
  publisher    = {Springer},
  year         = {2025},
  opturl          = {https://doi.org/10.1007/978-3-031-90653-4\_13},
  doi          = {10.1007/978-3-031-90653-4\_13},
  bibsource    = {dblp computer science bibliography, https://dblp.org}
}

@inproceedings{tjeng2017evaluating,
  author       = {Vincent Tjeng and
                  Kai Yuanqing Xiao and
                  Russ Tedrake},
  title        = {Evaluating Robustness of Neural Networks with Mixed Integer Programming},
  booktitle    = {7th International Conference on Learning Representations},
  eventtitleaddon = {{ICLR} 2019},
  publisher    = {OpenReview.net},
  year         = {2019},
  url          = {https://openreview.net/forum?id=HyGIdiRqtm},
  bibsource    = {dblp computer science bibliography, https://dblp.org}
}

@inproceedings{DBLP:conf/cav/TranYLMNXBJ20,
  author       = {Hoang{-}Dung Tran and
                  Xiaodong Yang and
                  Diego Manzanas Lopez and
                  Patrick Musau and
                  Luan Viet Nguyen and
                  Weiming Xiang and
                  Stanley Bak and
                  Taylor T. Johnson},
  editor       = {Shuvendu K. Lahiri and
                  Chao Wang},
  title        = {{NNV:} The Neural Network Verification Tool for Deep Neural Networks
                  and Learning-Enabled Cyber-Physical Systems},
  booktitle    = {32nd International Conference on Computer Aided Verification},
  eventtitleaddon = {{CAV} 2020},
  series       = {{LNCS}},
  volume       = {12224},
  pages        = {3--17},
  publisher    = {Springer},
  year         = {2020},
  opturl          = {https://doi.org/10.1007/978-3-030-53288-8\_1},
  doi          = {10.1007/978-3-030-53288-8\_1},
  bibsource    = {dblp computer science bibliography, https://dblp.org}
}

@inproceedings{wang2021beta,
  author       = {Shiqi Wang and
                  Huan Zhang and
                  Kaidi Xu and
                  Xue Lin and
                  Suman Jana and
                  Cho{-}Jui Hsieh and
                  J. Zico Kolter},
  editor       = {Marc'Aurelio Ranzato and
                  Alina Beygelzimer and
                  Yann N. Dauphin and
                  Percy Liang and
                  Jennifer Wortman Vaughan},
  title        = {Beta-CROWN: Efficient Bound Propagation with Per-neuron Split Constraints
                  for Neural Network Robustness Verification},
  booktitle    = {Annual Conference on Neural Information Processing Systems},
  eventtitleaddon = {{NeurIPS} 2021},
  pages        = {29909--29921},
  year         = {2021},
  opturl          = {https://proceedings.neurips.cc/paper/2021/hash/fac7fead96dafceaf80c1daffeae82a4-Abstract.html},
  bibsource    = {dblp computer science bibliography, https://dblp.org}
}

@misc{wightman1998lsac,
  title={{LSAC} National Longitudinal Bar Passage Study. LSAC Research Report Series.},
  author={Wightman, Linda F},
  year={1998},
  publisher={ERIC}
}

@inproceedings{DBLP:conf/icml/GengLXWGS23,
  author       = {Chuqin Geng and
                  Nham Le and
                  Xiaojie Xu and
                  Zhaoyue Wang and
                  Arie Gurfinkel and
                  Xujie Si},
  editor       = {Andreas Krause and
                  Emma Brunskill and
                  Kyunghyun Cho and
                  Barbara Engelhardt and
                  Sivan Sabato and
                  Jonathan Scarlett},
  title        = {Towards Reliable Neural Specifications},
  booktitle    = {International Conference on Machine Learning},
  eventtitleaddon = {{ICML} 2023},
  series       = {Proceedings of Machine Learning Research},
  volume       = {202},
  pages        = {11196--11212},
  publisher    = {{PMLR}},
  year         = {2023},
  url          = {https://proceedings.mlr.press/v202/geng23a.html},
  bibsource    = {dblp computer science bibliography, https://dblp.org}
}

@inproceedings{DBLP:conf/nfm/Kochdumper0AB23,
  author       = {Niklas Kochdumper and
                  Christian Schilling and
                  Matthias Althoff and
                  Stanley Bak},
  editor       = {Kristin Yvonne Rozier and
                  Swarat Chaudhuri},
  title        = {Open- and Closed-Loop Neural Network Verification Using Polynomial
                  Zonotopes},
  booktitle    = {15th {NASA} Formal Methods Symposium},
  eventtitleaddon = {{NFM} 2023},
  series       = {Lecture Notes in Computer Science},
  volume       = {13903},
  pages        = {16--36},
  publisher    = {Springer},
  year         = {2023},
  opturl          = {https://doi.org/10.1007/978-3-031-33170-1\_2},
  doi          = {10.1007/978-3-031-33170-1\_2},
  bibsource    = {dblp computer science bibliography, https://dblp.org}
}

@inproceedings{DBLP:conf/aaai/LadnerA24,
  author       = {Tobias Ladner and
                  Matthias Althoff},
  editor       = {Michael J. Wooldridge and
                  Jennifer G. Dy and
                  Sriraam Natarajan},
  title        = {Exponent Relaxation of Polynomial Zonotopes and Its Applications in
                  Formal Neural Network Verification},
  booktitle    = {38th {AAAI} Conference on Artificial Intelligence},
  eventtitleaddon = {{AAAI} 2024},
  pages        = {21304--21311},
  publisher    = {{AAAI} Press},
  year         = {2024},
  opturl          = {https://doi.org/10.1609/aaai.v38i19.30125},
  doi          = {10.1609/AAAI.V38I19.30125},
  bibsource    = {dblp computer science bibliography, https://dblp.org}
}

@inproceedings{DBLP:conf/amcc/ZhangX23,
  author       = {Yuhao Zhang and
                  Xiangru Xu},
  title        = {Reachability Analysis and Safety Verification of Neural Feedback Systems
                  via Hybrid Zonotopes},
  booktitle    = {American Control Conference},
  eventtitleaddon = {{ACC} 2023},
  pages        = {1915--1921},
  publisher    = {{IEEE}},
  year         = {2023},
  opturl          = {https://doi.org/10.23919/ACC55779.2023.10156417},
  doi          = {10.23919/ACC55779.2023.10156417},
  bibsource    = {dblp computer science bibliography, https://dblp.org}
}

@misc{kaulen20256th,
  title={The 6th International Verification of Neural Networks Competition (VNN-COMP 2025): Summary and Results},
  author={Kaulen, Konstantin and Ladner, Tobias and Bak, Stanley and Brix, Christopher and Duong, Hai and Flinkow, Thomas and Johnson, Taylor T and Koller, Lukas and Manino, Edoardo and Nguyen, ThanhVu H and others},
  year={2025},
  eprint={2512.19007},
  archivePrefix={arXiv},
  primaryClass={cs.LG},
  url={https://arxiv.org/abs/2512.19007}, 
}

@inproceedings{DBLP:conf/esann/AnguitaGOPR13,
  author       = {Davide Anguita and
                  Alessandro Ghio and
                  Luca Oneto and
                  Xavier Parra and
                  Jorge Luis Reyes{-}Ortiz},
  title        = {A Public Domain Dataset for Human Activity Recognition using Smartphones},
  booktitle    = {21st European Symposium on Artificial Neural Networks},
  eventtitleaddon = {{ESANN} 2013},
  year         = {2013},
  url          = {https://www.esann.org/sites/default/files/proceedings/legacy/es2013-84.pdf},
  bibsource    = {dblp computer science bibliography, https://dblp.org}
}

@inproceedings{DBLP:conf/nips/GeifmanE17,
  author       = {Yonatan Geifman and
                  Ran El{-}Yaniv},
  editor       = {Isabelle Guyon and
                  Ulrike von Luxburg and
                  Samy Bengio and
                  Hanna M. Wallach and
                  Rob Fergus and
                  S. V. N. Vishwanathan and
                  Roman Garnett},
  title        = {Selective Classification for Deep Neural Networks},
  booktitle    = {Advances in Neural Information Processing Systems},
  eventtitleaddon = {{NeurIPS} 2017},
  pages        = {4878--4887},
  year         = {2017},
  opturl          = {https://proceedings.neurips.cc/paper/2017/hash/4a8423d5e91fda00bb7e46540e2b0cf1-Abstract.html},
  bibsource    = {dblp computer science bibliography, https://dblp.org}
}

@article{clarkson2010hyperproperties,
  title={Hyperproperties},
  author={Clarkson, Michael R and Schneider, Fred B},
  journal={Journal of Computer Security},
  volume={18},
  number={6},
  pages={1157--1210},
  year={2010},
  publisher={SAGE Publications Sage UK: London, England}
}

@misc{bibbo2023human,
  title={Human activity recognition (HAR) in healthcare},
  author={Bibbo, Luigi and Vellasco, Marley MBR},
  journal={Applied Sciences},
  volume={13},
  number={24},
  pages={13009},
  year={2023},
  publisher={MDPI}
}

@article{sankaran2025future,
  title={The future of smart healthcare: how ai and har are reshaping hospital workflows},
  author={Sankaran, Ashwin},
  journal={Well Testing J},
  volume={34},
  number={1},
  pages={68--92},
  year={2025}
}

@inproceedings{10.1109/ICSE55347.2025.00016,
author = {Kim, Brian Hyeongseok and Wang, Jingbo and Wang, Chao},
title = {FairQuant: Certifying and Quantifying Fairness of Deep Neural Networks},
year = {2025},
isbn = {9798331505691},
publisher = {IEEE Press},
url = {https://doi.org/10.1109/ICSE55347.2025.00016},
doi = {10.1109/ICSE55347.2025.00016},
booktitle = {Proceedings of the IEEE/ACM 47th International Conference on Software Engineering},
pages = {527–539},
numpages = {13},
location = {Ottawa, Ontario, Canada},
series = {ICSE '25}
}

@article{de2017mobile,
  title={Mobile activity recognition and fall detection system for elderly people using Ameva algorithm},
  author={de la Concepcion, Miguel Angel Alvarez and Morillo, Luis Miguel Soria and Garc{\'\i}a, Juan Antonio {\'A}lvarez and Gonzalez-Abril, Luis},
  journal={Pervasive and Mobile Computing},
  volume={34},
  pages={3--13},
  year={2017},
  publisher={Elsevier}
}

@article{serpush2022wearable,
  title={Wearable sensor-based human activity recognition in the smart healthcare system},
  author={Serpush, Fatemeh and Menhaj, Mohammad Bagher and Masoumi, Behrooz and Karasfi, Babak},
  journal={Computational intelligence and neuroscience},
  volume={2022},
  number={1},
  pages={1391906},
  year={2022},
  publisher={Wiley Online Library}
}

@inproceedings{liu2021overview,
  title={An overview of human activity recognition using wearable sensors: Healthcare and artificial intelligence},
  author={Liu, Rex and Ramli, Albara Ah and Zhang, Huanle and Henricson, Erik and Liu, Xin},
  booktitle={International Conference on Internet of Things},
  pages={1--14},
  year={2021},
  organization={Springer}
}

\appendix

\section{Proofs}
\label{apx:proofs}
\printProofs

\end{document}